\newcommand{\blind}{0}
\newcommand{\BALD}{\begin{aligned}}
\newcommand{\EALD}{\end{aligned}}
\newcommand{\BALDS}{\begin{aligned*}}
\newcommand{\EALDS}{\end{aligned*}}
\newcommand{\BCAS}{\begin{cases}}
\newcommand{\ECAS}{\end{cases}}
\newcommand{\BEAS}{\begin{eqnarray*}}
\newcommand{\EEAS}{\end{eqnarray*}}
\newcommand{\BEQ}{\begin{equation}}
\newcommand{\EEQ}{\end{equation}}
\newcommand{\BIT}{\begin{itemize}}
\newcommand{\EIT}{\end{itemize}}
\newcommand{\BMAT}{\begin{bmatrix}}
\newcommand{\EMAT}{\end{bmatrix}}
\newcommand{\BNUM}{\begin{enumerate}}
\newcommand{\ENUM}{\end{enumerate}}
\newcommand{\eg}{e.g.}
\newcommand{\ie}{i.e.}
\newcommand{\BA}{\begin{array}}
\newcommand{\EA}{\end{array}}
\newcommand{\reals}{\mathbf{R}}
\DeclareMathOperator*{\maximize}{maximize}
\DeclareMathOperator*{\minimize}{minimize}
\renewcommand{\Pr}{\mathbf{P}}
\newcommand{\subjectto}{\textrm{subject to}}
\DeclareMathOperator{\linspan}{span}
\newcommand{\norm}[1]{\left\| #1 \right\|}
\newcommand{\iid}{\emph{i.i.d.}}
\DeclareMathOperator{\cone}{cone}
\DeclareMathOperator{\conv}{conv}
\DeclareMathOperator{\ext}{ext}
\newcommand{\bS}{\mathbf{S}}
\newcommand{\cN}{\mathcal{N}}
\newcommand{\Cap}{\mathrm{Cap}}
\DeclareMathOperator{\diam}{diam}
\DeclareMathOperator{\unif}{unif}
\DeclareMathOperator{\arccot}{arccot}
\numberwithin{equation}{section}
  \theoremstyle{plain}
\newtheorem{theorem}{Theorem}[section]
\newtheorem{assumption}[theorem]{Assumption}
\newtheorem{corollary}[theorem]{Corollary}
\newtheorem{lemma}[theorem]{Lemma}
\newtheorem{remark}[theorem]{Remark}
\begin{document}

\def\spacingset#1{\renewcommand{\baselinestretch}%
{#1}\small\normalsize} \spacingset{1}


\if0\blind
{
  \title{\bf A geometric approach to archetypal analysis and non-negative matrix factorization}
  \author{Anil Damle\thanks{
    The authors gratefully acknowledge Trevor Hastie, Jason Lee, Philip Pauerstein, Michael Saunders, Jonathan Taylor, Jennifer Tsai and Lexing Ying for their insightful comments. Trevor Hastie suggested the group-lasso approach to selecting extreme points. A. Damle was supported by a NSF Graduate Research Fellowship DGE-1147470. Y. Sun was partially support by the NIH, grant U01GM102098.}\hspace{.2cm}\\
    Institute for Computational and Mathematical Engineering, Stanford University\\
    and \\
    Yuekai Sun \\
    Department of Statistics, UC Berkeley}
  \maketitle
} \fi

\if1\blind
{
  \bigskip
  \bigskip
  \bigskip
  \begin{center}
    {\LARGE\bf A geometric approach to archetypal analysis and non-negative matrix factorization}
\end{center}
  \medskip
} \fi

\bigskip
\begin{abstract}
Archetypal analysis and non-negative matrix factorization (NMF) are staples in a statisticians toolbox for dimension reduction and exploratory data analysis. We describe a geometric approach to both NMF and archetypal analysis by interpreting both problems as finding extreme points of the data cloud. We also develop and analyze an efficient approach to finding extreme points in high dimensions. For modern massive datasets that are too large to fit on a single machine and must be stored in a distributed setting, our approach makes only a small number of passes over the data. In fact, it is possible to obtain the NMF or perform archetypal analysis with just two passes over the data.
\end{abstract}

\noindent%
{\it Keywords:} convex hull, random projections, group lasso
\vfill
\hfill {\tiny technometrics tex template (do not remove)}

\newpage
\spacingset{1.45} 
\section{Introduction}
\label{sec:introduction}

Archetypal analysis (by \cite{cutler1994archetypal}) and non-negative matrix factorization (NMF) (by \cite{lee1999learning}) are staple approaches to finding low-dimensional structure in high-dimensional data. At a high level, the goal of both tasks boil down to approximating a data matrix $X\in\reals^{n\times p}$ with factors $W\in\reals^{n\times k}$ and $H\in\reals^{k\times p}:$
\BEQ
X\approx WH.
\label{eq:matrix-nearness-problem}
\EEQ

In archetypal analysis, the rows of $H$ are archetypes, and the rows of $W$ are convex combinations that (approximately) represent the data points. The archetypes are forced to be convex combinations of the data points:
\BEQ
H = BX
\label{eq:archetype-constraint}
\EEQ
By requiring the data points to be convex combinations of the rows of $H$, archetypal analysis forces the archetypes to lie on the convex hull of the data cloud. Thus the archetypes are interpretable as ``pure'' data points. Given \eqref{eq:matrix-nearness-problem} and \eqref{eq:archetype-constraint}, a natural approach to archetype analysis is to solve the optimization problem
\BEQ
\minimize_{W,B}\frac12\norm{X - WBX}_F^2.
\label{eq:archetypal-analysis-problem}
\EEQ
The problem is solved by alternating minimization over $W$ and $B.$ The overall problem is non-convex, so the algorithm converges to a local minimum of the problem.

In NMF, the entries of $X$ and $H$ are also required to be non-negative. NMF is usually motivated as an alternative to principal components analysis (PCA), in which the data and components are assumed to be non-negative. In some scientific applications, requiring the components to be non-negative makes the factorization consistent with physical reality, and gives more interpretable results versus more classical tools. Given its many applications NMF has been studied extensively, and many clever heuristics were proposed over the years to find NMFs. \cite{lee2001algorithms} proposes a multiplicative update algorithm that solves the optimization problem
\BEQ
\maximize_{W,H} \sum_{i=1}^n\sum_{j=1}^p x_{ij}\log(WH)_{ij} - (WH)_{ij}.
\label{eq:multiplicative-algorithm}
\EEQ
The solution to \eqref{eq:multiplicative-algorithm} is the maximum likelihood estimator for a model in which $x_{ij}$ is Poisson distributed with mean $(WH)_{ij}.$ An alternative approach is to minimize the residual sum-of-squares $\frac12\norm{X - WH}_F^2$ by alternating minimization over $W$ and $H.$ Although these heuristics often perform admirably, none are sure to return the correct factorization.\footnote{\cite{vavasis2009complexity} showed computing the NMF is, in general, NP-hard.} However, a recent line of work started by \cite{arora2012computing} showed that the problem admits an efficient solution when the matrix is \emph{separable}. In this work, we also focus on factorizing separable matrices.

Although the goal of both archetypal analysis and NMF boil down to the same matrix nearness problem, the two approachs are usually applied in different settings and have somewhat different goals. In NMF, we require $k \le p.$ Otherwise, we may obtain a trival exact NMF by setting $W = I$ and $H = X.$ In archtypal analysis, we require $k \le n,$ but allow $k \ge p.$ The archetype constraint \eqref{eq:archetype-constraint} implies the archetypal approximation will not be perfect even if we allow $k \ge p.$

\subsection{Separable archetypal analysis and NMF}

The notion of a separable NMF was introduced by \cite{donoho2003when} in the context of image segmentation.

\begin{assumption}
A non-negative matrix $X\in\reals^{n\times p}$ is separable if and only if there exists a permutation matrix $P\in\reals^{n\times n}$ such that
$$
PX = \BMAT I \\ W_2\EMAT H,
$$
where $W_2\in\reals^{(n-k)\times k}$ and $H\in\reals^{k\times p}$ are non-negative.
\end{assumption}

The notion of separability has a geometric interpretation. It asserts that the conical hull of a small subset of the data points (the points that form $H$) contain the rest of the data points, \ie{} the rows of $X$ are contained in the cone generated by the rows in $H$:
$$
\{x_1,\dots,x_n\} \subset \cone(\{h_1,\dots,h_k\}).
$$
The rows of $H$ are the extreme rays of the cone. If $x_1,\dots,x_n$ are normalized to lie on some (affine) hyperplane $A,$ then the separability assumption implies $x_1,\dots,x_n$ are contained in a polytope $P\subset A$ and $h_1,\dots,h_k$ are the \emph{extreme points} of $P$.

The separable assumption is justified in many applications of NMF; we give two common examples.
\BNUM
\item In hyperspectral imaging, a common post-processing step is \emph{unmixing}: detecting the materials in the image and estimating their relative abundances. Unmixing is equivalent to computing a NMF of the hyperspectral image. The separability assumption asserts for each material in the image, there exists at least one pixel containing only that material. The assumption is so common that it has a name: the \emph{pure-pixel assumption.}
\item In document modeling, documents are usually modeled as additive mixtures of topics. Given a collection of documents, the NMF of the \emph{document-term matrix} reveals the topics in the collection. The separability assumption is akin to assuming for each topic, there is a word that only appears in documents concerning that topic. Such special words are called \emph{anchor words.}
\ENUM

Given the geometric interpretation of separability, it is straightfoward to generalize the notion to archetypal analysis. In archetypal analysis, the archetypes $h_k,\dots,h_k$ are usually convex combinations of the data points. If we force the archetypes to be data points, \ie{} enforce
\[
H = EX,
\]
where the rows of $E\in\reals^{k\times n}$ are a subset of the rows of the identity matrix, then we are forcing the archetypes to be extreme points of the data cloud. The analogous optimization problem for separable archetype analysis is
\BEQ
\minimize_{W,E}\frac12\norm{X - WEX}_F^2,
\label{eq:separable-archetypal-analysis-problem}
\EEQ
where $E$ is constrained to consist of a subset of the rows of the identity. It seems \eqref{eq:separable-archetypal-analysis-problem} is harder than \eqref{eq:archetypal-analysis-problem} because minimizing over $E$ is a combinatorial problem. However, as we shall see, separability allows us to reduce archetypal analysis and NMF to an extreme point finding problem that admits an efficient solution.

\subsection{Related work on separable NMF}

To place our algorithm in the correct context, we review the recently proposed algorithms for computing a NMF when $X$ is separable. All these algorithms exploit the geometric interpretation of a separability and find the extreme points/rays of the smallest polytope/cone that contains the columns of $X$.
\BNUM
\item \cite{arora2012computing} describe a method which checks whether each column of $X$ is an extreme point by solving a linear program (LP). Although this is the first polynomial time algorithm for separable NMF, solving a LP per data point is not practical when the number of data points is large.
\item \cite{bittorf2012factoring} make the observation that $X$ has the form
\[
X = P^T\BMAT I & 0\,\\ W_2 & 0\, \EMAT X = CX,
\]
for some $C\in\reals^{n\times n}$. To find $C$, they solve a LP with $n^2$ variables. To handle large problems, they use a first-order method to solve the LP. \cite{gillis2013robustness} later developed a post-processing procedure to make the approach in \cite{bittorf2012factoring} more robust to noise.
\item \cite{esser2012convex} formulate the column subset selection problem as a dictionary learning problem and use $\ell_{1,\infty}$ norm regularization to promote sparse dictionaries. Although convex, the dictionary learning problem may not find the sparsest dictionary.
\item \cite{gillis2012fast} describe a family of recursive algorithms that maximize strongly convex functions over the cloud of points to find extreme points. Their algorithms are based on the intuition that the maximum of a strongly convex function over a polytope is attained at an extreme point.
\item \cite{kumar2013fast} describe an algorithm called \textsc{Xray} for finding the extreme rays by ``expanding'' a cone one extreme ray at a time until all the columns of $X$ are contained in this cone.
\ENUM
Algorithms 1, 2, and 3 require the solution of convex optimization problems and are not suited to factorizing large matrices (\eg{} document-term matrices where $n\sim 10^9$). Algorithms 1, 2, and 5 also require the non-negative rank $k$ to be known \emph{a priori}, but $k$ is usually not known in practice. Algorithms 1 and 2 also depend heavily on separability, while our approach gives interpretable results even when the matrix is not separable. Finally, algorithm 4 requires $U$ to be full rank, but this may not be the case in practice.

The idea of finding the extreme points of a point cloud by repeatedly maximizing and minimizing linear functions is not new. An older algorithm for unmixing hyperspectral images is pure-pixel indexing (PPI) by \cite{boardman1994geometric}. PPI is a popular technique for unmixing due to its simplicity and availability in many image analysis packages. The geometric intuition behind PPI is the same as the intuition behind our algorithm, but there are few results concerning the performance of this simple algorithm. Since its introduction, many extensions and modifications of the core algorithm have been proposed; \eg{} \cite{nascimento2005vertex, chang2006fast}. Recently, \cite{ding2013topic} propose algorithms for topic modeling based on similar ideas.

\section{Archetype pursuit}
\label{sec:archetype-pursuit}

Given a cloud pf points in the form of a data matrix $X\in\reals^{n\times p},$ we focus on finding the extreme points of the cloud. We propose a randomized approach that finds all $k$ extreme points in $O(pk\log k)$ floating point operations (flops) with high probability. In archetypal analysis, the extreme points are the archetypes. Thus we refer to our approach as archetype pursuit. After finding the extreme points, we solve for the weights by non-negative least squares (NNLS):
\BEQ
\label{eq:nnls}
\begin{aligned}
& \minimize_W & & \frac12\norm{X-WH}_F^2 \\
& \subjectto & & W \ge 0.
\end{aligned}
\EEQ

The geometric intuition behind archetype pursuit is simple: the extrema of linear functions over a convex polytope is attained at extreme points of the polytope. By repeatedly maximizing and minimizing linear functions over the point cloud, we find the extreme points. As we shall see, by choosing \emph{random} linear functions, the number of optimizations required to find all the extreme points with high probability depends only on the number of extreme points (and not the total number of points).

Another consequence of the geometric interpretation is the observation that projecting the point cloud onto a random subspace of dimension at least $k+1$ preserves all of the extreme points with probability one. Such a random projection could be used as a precursor to existing NMF algorithms as it effectively reduces the dimension of the problem. However, given the nature of the algorithm we discuss here a random projection of this form would yield no additional benefits.

\subsection{A prototype algorithm}

We first describe and analyze a proto-algo\-rithm for finding the extreme points of a point cloud. This algorithm closely resembles the original PPI algorithm as described in \cite{boardman1994geometric}.

\begin{algorithm}
\caption{Proto-algorithm}
\label{alg:proto-algorithm}
\begin{algorithmic}[1]
\Require $X\in\reals^{n\times p}$
\State Generate an $p\times m$ random matrix $G$ with independent standard normal entries.
\State Form the product $XG$.
\State Find the indices of the max $I_{\max}$ and min $I_{\min}$ in each column of $XG$.
\State Return $H = X_{I_{\max}\cup I_{\min}}.$
\end{algorithmic}
\end{algorithm}

The proto-algorithm finds points attaining the maximum and minimum of random linear functions on the point cloud. Each column of the random matrix $G$ is a random linear function, hence forming $XG$ evaluates $m$ linear functions at the $n$ points in the cloud. A natural question to ask is how many optimizations of random linear functions are required to find all the extreme points with high probability?

\subsubsection{Relevant notions from convex geometry}
\label{sec:convex-geometry}

Before delving into the analysis of the proto-algorithm, we review some concepts from convex geometry that appear in our analysis. A convex \emph{cone} $K\subset\reals^p$ is a convex set that is positively homogeneous, \ie{} $K = \lambda K$ for any $\lambda \ge 0$. Two examples are \emph{subspaces} and \emph{the non-negative orthant} $\reals_+^p$. A cone is \emph{pointed} if it does not contain a subspace. A subspace is not a pointed cone, but the non-negative orthant is. The \emph{polar cone} $K^\circ$ of a cone $K$ is the set
\[
K^\circ := \{y\in\reals^p\mid x^Ty\le 0\text{ for any }x\in K\}.
\]
The notion of polarity is a generalization of the notion of orthogonality. In particular, the polar cone of a subspace is its orthogonal complement. Given a convex cone $K\subset\reals^p$, any point $x\in\reals^p$ has an orthogonal decomposition into its projections\footnote{Given a closed convex set $C\subset\reals^p$, the \emph{projection} of a point $x$ onto $C$ is simply the closest point to $x$ in $C$, \ie{}
\[
\norm{x - P_C(x)}_2 = \inf_y\{\norm{x-y}_2\mid y\in C\}.
\]} onto $K$ and $K^\circ$. Further, the components $P_K(x)$ and $P_{K^\circ}(x)$ are orthogonal. This implies a conic Pythagorean theorem, \ie
\BEQ
\norm{x}_2^2 = \norm{P_K(x)}_2^2 + \norm{P_{K^\circ}(x)}_2^2.
\label{eq:conic-pythagorean-theorem}
\EEQ

Two cones that arise in our analysis deserve special mention: normal and circular cones. The \emph{normal cone} of a convex set $C$ at a point $x$ is the cone
\[
N_C(x) = \{w\in\reals^p\mid w^T(y-x) \le 0\text{ for any }y\in C\}.
\]
It is so called because it comprises the (outward) normals of the supporting hyperplanes at $x$. The polar cone of the normal cone is the \emph{tangent cone}:
\[
T_C(x) = \cone\left(C -x \right).
\]
The tangent cone is a good local approximation to the set $C$. A \emph{circular cone} or \emph{ice cream cone} is a cone of the form
\[
K = \{x\in\reals^p\mid \theta^Tx\ge t\norm{x}_2\}\text{ for some }\theta\in\bS^{p-1},t \in (0,1].
\]
In other words, a circular cone is a set of points making an angle smaller than $\arccos(t)$ with the axis $a$ ($\arccos(t)$ is called the \emph{angle} of the cone). The polar cone of a circular cone (with axis $a\in\reals^p$ and aperture $\arccos(t)$) is another circular cone (with axis $-a$ and angle $\frac{\pi}{2} - \arccos(t)$).

A \emph{solid angle} is a generalization of the angles in the (Cartesian) plane to higher dimensions. Given a (convex) cone $K\subset\reals^p$, the \emph{solid angle} $\omega(K)$ is the proportion of space that the cone $K$ occupies; \ie{} if we pick a random direction $x\in\reals^p$, the probability that $x\in K$ is the solid angle at the apex of $K$. Mathematically, the solid angle of a cone $K$ is given by
\[
\omega(K) = \int_K e^{-\pi\norm{x}_2^2}dx,
\]
where the integral is taken over $\linspan(K)$. By integrating over the linear hull of $K$, we ensure $\omega(K)$ is an \emph{intrinsic} measure of the size of $K$. When $K$ is full-dimensional (\ie{} $\linspan(K) = \reals^p$), the solid angle is equivalent to (after a change of variables)
\begin{align}
\omega(K) &= \frac{1}{(2\pi)^{p/2}}\int_K e^{-\frac12\norm{x}_2^2}dx = \Pr(z\in K),\,z\sim\cN(0,I) \label{eq:gaussian-solid-angle} \\
&= \Pr(\theta\in K\cap\bS^{p-1}),\,\theta\sim\unif(\bS^{p-1}). \label{eq:spherical-solid-angle}
\end{align}

For a convex polytope $P\subset\reals^p$ (the convex hull of finitely many points), the solid angles of the normal cones at its extreme points also form a probability distribution over the extreme points, \ie{}
\[
\sum_{h_i\,\in\,\ext(P)}\omega(N_P(h_i)) = 1.
\]
Furthermore, $\omega(N_P(h_i)) \in \left[0,\frac12\right).$ Calculating the solid angle of all but the simplest cone in $\reals^p,\,p > 3$ is excruciating. Fortunately, we know bounds on solid angles for some cones.

For a point $\theta\in\bS^{p-1}$, the set
\[
\Cap\left(\theta,t\right) = \left\{v\in\bS^{p-1}\mid \theta^Tv \ge t\right\}
\]
is called a \emph{spherical cap} of height $t$. Since the solid angle of a (convex) cone $K\subset\reals^p$ is the proportion of $\bS^{p-1}$ occupied by $K$, the solid angle of a circular cone with angle $\arccos(t)$ is given by the normalized area of the spherical cap $\Cap\left(\theta,t\right)$ for any $\theta\in\bS^{p-1}$:
\[
\omega\left(\{x\in\reals^p\mid \theta^Tx\ge t\norm{x}_2\}\right) = \sigma_{p-1}\left(\Cap(\theta,t)\right),
\]
where $\sigma_{p-1}$ is the rotation-invariant measure on $\bS^{p-1}$ of total mass 1.

To state estimates for the area of spherical caps, it is sometimes convenient to measure the size of a cap in terms of its \emph{chordal radius}. The spherical cap of radius $r$ around a point $\theta\in\bS^{p-1}$ is
\[\textstyle
\left\{v\in\bS^{p-1}\mid \norm{\theta - v}_2\le r\right\} = \Cap\left(\theta,\frac12r^2-1\right).
\]
Two well-known estimates for the area of spherical caps are given in \cite{ball1997elementary}. The lower bound is exactly \cite[Lemma 2.3]{ball1997elementary}, and the upper bound is a sharper form of \cite[Lemma 2.2]{ball1997elementary}.

\begin{lemma}[Lower bound on the area of spherical caps]
\label{lem:area-spherical-cap-2}
The spherical cap of radius $r$ has (normalized) area at least $\frac12\left(\frac{r}{2}\right)^{p-1}$.
\end{lemma}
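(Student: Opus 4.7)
The plan is to use the closed-form integral representation of the cap measure. Placing $\theta$ at the pole and writing out the marginal density of $\theta^Tv$ when $v\sim\unif(\bS^{p-1})$ gives, for $p\ge 3$,
\[
\sigma_{p-1}(C)=\frac{1}{B\!\left(\tfrac{p-1}{2},\tfrac{1}{2}\right)}\int_{1-r^2/2}^{1}(1-u^2)^{(p-3)/2}\,du,
\]
and $\sigma_1(C)=\arccos(1-r^2/2)/\pi$ for $p=2$. The strategy is to lower-bound this integral by an elementary closed form and compare against the target $\tfrac{1}{2}(r/2)^{p-1}$.

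For the main regime $r\le\sqrt{2}$ and $p\ge 3$, I would substitute $u=1-s$ and apply $1-(1-s)^2=s(2-s)\ge s$ (valid for $s\in[0,1]$) to pass to the tail integral $\int_0^{r^2/2}s^{(p-3)/2}\,ds$. Evaluating it and simplifying the beta-function prefactor via $\Gamma((p+1)/2)=\tfrac{p-1}{2}\Gamma((p-1)/2)$ yields
\[
\sigma_{p-1}(C)\ge\frac{(r^2/2)^{(p-1)/2}\,\Gamma(p/2)}{\sqrt{\pi}\,\Gamma((p+1)/2)}.
\]
Gautschi's inequality $\Gamma(p/2)/\Gamma((p+1)/2)\ge\sqrt{2/(p+1)}$ then reduces the claim to the elementary numerical inequality $2^{p+2}\ge\pi(p+1)$, which holds for all $p\ge 2$. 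The case $p=2$ follows directly from $\sin\gamma\le\gamma$ applied with $\gamma=\tfrac{1}{2}\arccos(1-r^2/2)$, giving $\arccos(1-r^2/2)/\pi\ge r/4$. For the remaining regime $r\ge\sqrt{2}$ (where the substitution argument breaks down because $1-(1-s)^2\ge s$ fails for $s>1$), the cap contains a closed hemisphere, so $\sigma_{p-1}(C)\ge 1/2$, and the bound follows since $(r/2)^{p-1}\le 1$ whenever $r\le 2$.

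I expect the main obstacle to be the constant-matching inequality $2^{p+2}\ge\pi(p+1)$: it holds comfortably but only because Gautschi's bound is near-tight, reflecting the fact that the constant $\tfrac{1}{2}$ in the lemma is genuinely sharp up to lower-order factors. The only other piece of delicate bookkeeping is gluing the three subcases ($p=2$ direct, $p\ge 3$ with $r\le\sqrt{2}$, and $r\ge\sqrt{2}$ hemisphere) cleanly across the two transitions, but these are routine once the integral estimate and the hemisphere bound are in hand.
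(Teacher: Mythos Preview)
The paper does not prove this lemma at all: it simply quotes it as \cite[Lemma~2.3]{ball1997elementary}. Your argument is a correct, self-contained proof. The integral representation is right, the substitution $u=1-s$ together with $s(2-s)\ge s$ on $[0,1]$ legitimately reduces to $\int_0^{r^2/2}s^{(p-3)/2}\,ds$ for $p\ge 3$, the beta simplification and Gautschi bound $\Gamma(p/2)/\Gamma((p+1)/2)>\sqrt{2/(p+1)}$ are standard, and the residual inequality $2^{p+2}\ge\pi(p+1)$ indeed holds for every $p\ge 2$. The $p=2$ case via $\sin\gamma\le\gamma$ with $\sin\gamma=r/2$ actually gives $\sigma_1(C)\ge r/\pi$, which is stronger than the required $r/4$; and the hemisphere bound handles $r\in[\sqrt{2},2]$ cleanly.

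For comparison, Ball's original proof in the cited reference is purely geometric rather than analytic: one inscribes a small sphere of radius $r/2$ internally tangent to $\bS^{p-1}$ at the cap's center, observes that its outer half radially projects into the cap, and uses that outward radial projection does not decrease $(p-1)$-dimensional area. That picture explains at a glance where the constant $\tfrac12(r/2)^{p-1}$ comes from and needs no case split or gamma-function estimates. Your analytic route is longer and requires the Gautschi step to close the constant, but it is elementary, avoids any geometric visualization, and would generalize more readily if one wanted sharper constants or weighted variants.
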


\begin{lemma}[Upper bounds on the area of spherical caps]
\label{lem:area-spherical-cap-1}
The spherical cap of height $t$ has (normalized) area at most
\begin{gather*}
\textstyle \left(1-t^2\right)^{p/2}\text{ for any }t\in\left[0,1/\sqrt{2}\right] \\
\textstyle \left(\frac{1}{2t}\right)^p\text{ for any }t\in\left[1/\sqrt{2},1\right).
\end{gather*}
\end{lemma}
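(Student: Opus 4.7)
The approach I would take converts the cap's normalized area into a ratio of Euclidean volumes and then derives each bound by enclosing the relevant region in an explicit ball. Since the paper already identifies $\sigma_{p-1}(\Cap(\theta,t))$ with the solid angle $\omega(K_t)$ of the circular cone $K_t = \{x\in\reals^p : \theta^Tx \ge t\norm{x}_2\}$, I may fix $\theta = e_1$ without loss of generality and derive the identity
\[
\sigma_{p-1}(\Cap(e_1,t)) = \omega(K_t) = \frac{\vol(K_t \cap \bB_2^p)}{\volB}.
\]
This follows from a radial decomposition in spherical coordinates: both numerator and denominator factor as the same integral $\int_0^1 r^{p-1}\,dr$ times the surface area of the cap and of the sphere respectively, and the radial factor cancels. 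The problem therefore reduces to enclosing $K_t \cap \bB_2^p$ in an explicit Euclidean ball whose volume yields the desired bound.

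For the narrow-cone regime $t \in [1/\sqrt 2, 1)$, I would try the ball centered at $\tfrac{1}{2t}e_1$ with radius $\tfrac{1}{2t}$. For any $x \in K_t \cap \bB_2^p$, expanding and invoking the cone inequality $x_1 \ge t\norm{x}_2$ together with $\norm{x}_2 \le 1$ gives
\[
\Bigl\|x - \tfrac{1}{2t}e_1\Bigr\|_2^2 = \norm{x}_2^2 - \tfrac{x_1}{t} + \tfrac{1}{4t^2} \le \norm{x}_2\bigl(\norm{x}_2 - 1\bigr) + \tfrac{1}{4t^2} \le \tfrac{1}{4t^2},
\]
so $K_t \cap \bB_2^p$ lies in a ball of radius $1/(2t)$, and a volume comparison immediately yields $(1/(2t))^p$.

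For the wide-cone regime $t \in [0, 1/\sqrt 2]$, I would instead use the ball centered at $te_1$ with radius $\sqrt{1-t^2}$. The cone condition gives
\[
\norm{x - te_1}_2^2 = \norm{x}_2^2 - 2tx_1 + t^2 \le \norm{x}_2^2 - 2t^2\norm{x}_2 + t^2,
\]
and the right-hand side, as an upward-opening quadratic in $r = \norm{x}_2$ over $[0,1]$, achieves its maximum at one of the endpoints, where its values are $t^2$ and $1 - t^2$. The restriction $t \le 1/\sqrt 2$ is exactly what makes $1 - t^2$ the larger of the two, so $\norm{x - te_1}_2^2 \le 1 - t^2$ and the corresponding volume ratio delivers $(1-t^2)^{p/2}$.

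The only genuinely non-routine step is selecting the correct ball center in each regime; thereafter the inequalities reduce to the cone condition and a one-variable quadratic optimization. The threshold $t = 1/\sqrt 2$ is precisely where the optimal enclosing ball changes character, from a ball placed well inside $K_t$ (when the cone is wide) to one anchored near the apex $0$ of $K_t$ (when the cone is narrow).
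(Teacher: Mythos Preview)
Your argument is correct in both regimes: the volume identity $\sigma_{p-1}(\Cap(e_1,t)) = \vol(K_t\cap\bB_2^p)/\volB$ is valid, and each ball-enclosure computation checks out line by line. Note, however, that the paper does not supply its own proof of this lemma; it simply quotes the bounds and attributes them to Ball's survey, remarking that the first inequality is a sharper form of \cite[Lemma~2.2]{ball1997elementary}. Your proof is precisely that classical argument---enclosing the spherical sector in a Euclidean ball of radius $\sqrt{1-t^2}$ (resp.\ $1/(2t)$) and comparing volumes---so there is no methodological divergence to discuss: you have reproduced the standard proof the paper defers to.
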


We are now ready to analyze the proto-algorithm. Our analysis focuses on the solid angles of normal cones at the extreme points $h_i$ of a convex polytope $P\subset\reals^p$. To simplify notation, we shall say $\omega_i$ in lieu of $\omega(N_P(h_i))$ when the polytope $P$ and extreme point $h_i$ are clear from context. The main result shows we need $O(k\log k)$ optimizations to find all the extreme points with high probability.

\begin{theorem}
\label{thm:exact-recovery}
If $m > \kappa\log\left(\frac{k}{\delta}\right)$, $\kappa = 1 / \log\left(\frac{1}{\max_i1-2\omega_i}\right)$, then the proto-algorithm finds all $k$ extreme points with probability at least $1-\delta$.
\end{theorem}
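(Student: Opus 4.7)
The plan is to reduce the claim to a union bound, column-by-column and then over extreme points, with each elementary event governed by a solid angle. As a preliminary geometric observation, for any direction $g\in\reals^p$ one has $\max_i x_i^Tg = \max_{x\in P}x^Tg$ with $P = \conv(\{x_1,\ldots,x_n\})$: every $x_i$ is a convex combination of $\{h_1,\ldots,h_k\}$, and a linear functional on a polytope attains its maximum at a vertex. The standard dual characterization of normal cones then says the unique maximizer is $h_j$ precisely when $g$ lies in the interior of $N_P(h_j)$, and symmetrically $h_j$ is the unique minimizer exactly when $-g$ does.

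Fix a single column $g_\ell\sim\cN(0,I_p)$ of $G$. By the Gaussian representation \eqref{eq:gaussian-solid-angle} of the solid angle, $\Pr(g_\ell\in N_P(h_j))=\omega_j$; since $-g_\ell$ has the same distribution, $\Pr(-g_\ell\in N_P(h_j))=\omega_j$ as well. These two events are disjoint up to a Gaussian-null set, because their intersection is the lineality subspace $N_P(h_j)\cap(-N_P(h_j))$, which is a proper subspace of $\reals^p$ (since $\omega_j<\tfrac12$ forces $N_P(h_j)$ to be pointed) and hence has Gaussian measure zero. Thus $\Pr(h_j\text{ is flagged by }g_\ell)=2\omega_j$, and by the independence of the $m$ columns of $G$,
\[
\Pr(h_j\text{ missed by all }m\text{ columns}) = (1-2\omega_j)^m.
\]

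A union bound over $j=1,\ldots,k$ yields
\[
\Pr(\exists j:\ h_j\text{ missed}) \le \sum_{j=1}^{k}(1-2\omega_j)^m \le k\bigl(\max_j(1-2\omega_j)\bigr)^m,
\]
and requiring this to be at most $\delta$ and solving for $m$ gives $m\ge\log(k/\delta)/\log\bigl(1/\max_j(1-2\omega_j)\bigr) = \kappa\log(k/\delta)$, matching the theorem's hypothesis. The most delicate step is the disjointness argument in the second paragraph: a naive union bound over ``max'' and ``min'' events would only yield $(1-\omega_j)^{2m}>(1-2\omega_j)^m$ per extreme point, degrading $\kappa$. Everything else amounts to bookkeeping around the definitions of normal cone and solid angle recalled in Section~\ref{sec:convex-geometry}.
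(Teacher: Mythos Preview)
Your proof is correct and follows essentially the same route as the paper: union bound over the $k$ extreme points, then independence across the $m$ columns, with each column missing $h_j$ with probability $1-2\omega_j$ via the Gaussian solid-angle identity. You supply more detail than the paper does on why $\Pr\bigl(g\in N_P(h_j)\cup -N_P(h_j)\bigr)=2\omega_j$ (the paper simply asserts it), though your justification that $\omega_j<\tfrac12$ forces pointedness is not quite the right reason---the lineality space $N_P(h_j)\cap(-N_P(h_j))$ equals the orthogonal complement of the direction space of $\operatorname{aff}(P)$, which is a proper subspace whenever $k\ge 2$, and that already gives Gaussian measure zero.
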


\begin{proof}
Let $h_i,\,i=1,\dots,k$ be the extreme points. By a union bound,
\BEQ
\Pr\left(\{\text{miss any }h_i\}\right) \le \sum_{i=1}^k\Pr\left(\{\text{miss }h_i\}\right)
\label{eq:miss-ext-point-1}
\EEQ
By the optimality conditions for optimizing a linear function, denoted $g_j,$ over a convex polytope, the event $\{\text{miss }x_i\}$ is equivalent to
\[
\bigcap_{j=1}^m\left\{g_j\notin N_P(h_i)\cup -N_P(h_i)\right\}.
\]
Since the (random) linear functions $g_j,\,j=1,\dots,m$ are \iid{} $\cN(0,I)$, we have
\begin{align*}
\Pr\left(\{\text{miss }h_i\}\right) = \prod_{j=1}^m\Pr\left(g_j\notin N_P(h_i)\cup -N_P(h_i)\right) = (1 - 2\omega_i)^m.
\end{align*}
We substitute this expression into \eqref{eq:miss-ext-point-1} to obtain
\[
\Pr\left(\{\text{miss }h_i\}\right) \le \sum_{i=1}^k(1-2\omega_i)^m \le k\bigl(\max_i1-2\omega_i\bigr)^m.
\]
If we desire the probability of missing an extreme point to be smaller than $\delta$, then we must optimize at least
$$
m > \kappa\log\left(\frac{k}{\delta}\right),\,\kappa = 1 / \log\left(\frac{1}{\max_i1-2\omega_i}\right)
$$
linear functions.
\end{proof}

The constant $\kappa = 1 / \log\left(\frac{1}{\max_i1-2\omega_i}\right)$ is smallest when $\omega_1 = \dots = \omega_k = \frac1k$. Thus, $\kappa$ is at least $1 / \log\left(\frac{1}{\max_i1-2/k}\right),$ which is approximately $\frac{k}{2}$ when $k$ is large. Since $\kappa$ grows linearly with $k$, we restate Theorem \ref{thm:exact-recovery} in terms of the normalized constant
\[
\bar{\kappa} = \frac{1}{k\log\left(\frac{1}{\max_i1-2\omega_i}\right)}.
\]

\begin{corollary}
\label{cor:exact-recovery}
If $m > \bar{\kappa}\,k\log\left(\frac{k}{\delta}\right)$, then the proto-algorithm finds all $k$ extreme points with probability at least $1-\delta$.
\end{corollary}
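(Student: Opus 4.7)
The plan is to observe that this corollary is essentially a cosmetic restatement of Theorem~\ref{thm:exact-recovery}, obtained by factoring the growth in $k$ out of the constant $\kappa$. No new probabilistic argument is needed; the work lies entirely in matching definitions and justifying the ``$\bar\kappa$ is a well-scaled constant'' interpretation stated just before the corollary.

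Concretely, I would proceed in two short steps. First, unpack the definitions: by construction
\[
\bar{\kappa} \;=\; \frac{1}{k\log\!\left(\tfrac{1}{\max_i 1 - 2\omega_i}\right)} \;=\; \frac{\kappa}{k},
\]
so the hypothesis $m > \bar{\kappa}\, k\log(k/\delta)$ is literally the same inequality as $m > \kappa\log(k/\delta)$. Second, invoke Theorem~\ref{thm:exact-recovery} verbatim to conclude that the proto-algorithm recovers all $k$ extreme points with probability at least $1-\delta$.

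The only ``content'' worth flagging is the motivation for the reformulation, which was laid out in the paragraph preceding the corollary: the discussion there shows that $\kappa$ is minimized when the $\omega_i$ are balanced at $1/k$, in which case $\kappa \approx k/2$ for large $k$. Normalizing by $k$ isolates the dimension-free quantity $\bar\kappa$, which is $O(1)$ in the balanced regime and degrades only when the solid angles $\omega_i$ are very unequal. So the corollary is best thought of as rescaling the sample-complexity bound into the more interpretable form $m = O(k\log(k/\delta))$, with all problem-dependent geometry absorbed into the single constant $\bar\kappa$.

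There is no real obstacle here; the proof is a one-line substitution, and the substantive probability estimate has already been carried out in Theorem~\ref{thm:exact-recovery}.
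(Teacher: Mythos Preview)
Your proposal is correct and matches the paper's treatment exactly: the paper gives no separate proof for this corollary, treating it as an immediate restatement of Theorem~\ref{thm:exact-recovery} after defining $\bar{\kappa}=\kappa/k$. Your one-line substitution is precisely what is intended.
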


\subsection{Simplicial constants and solid angles} The constant $\kappa$ is a condition number for the problem. $\kappa$ is large when the smallest normal cone at an extreme point is small. If $\omega_i$ is small, then
\[
\Pr\left(\{\text{miss }h_i\}\right) = (1-2\omega_i)^m
\]
is close to one. Intuitively, this means the polytope has extreme points that protrude subtly. The \emph{simplicial constant} makes this notion precise. For any extreme point $h_i$, the simplicial constant is
\BEQ
\label{eq:simplicial-constant}
\alpha_P(h_i) = \inf_x\left\{\norm{h_i - x}_2\mid x\in\conv\left(\ext(P)\setminus h_i\right)\right\}.
\EEQ
The simplicial constant is simply the distance of the extreme point $h_i$ to the convex hull formed by the other extreme points. To simplify notation, we shall say $\alpha_i$ in lieu of $\alpha_P(h_i)$ when the polytope $P$ and extreme point $h_i$ are clear from context.








The following pair of lemmas justifies our intuition that an extreme point with a small normal cone protrudes subtly and vice versa. We differ the proofs to the appendix.

\begin{lemma}
\label{lem:simplifical-constant-1}
Let $P\subset\reals^k$ be a (convex) polytope and $h_i\in\ext(P)$. If the solid angle of $N_P(h_i)$ is $\omega_i$, then the simplicial constant
\[
\alpha_P(h_i) = \alpha_i \leq R_{\max}\,\frac{r(\omega_i)\sqrt{1-\frac14r(\omega_i)^2}}{1-\frac{1}{2}r(\omega_i)^2},
\]
where $r(\omega) = 2(2\omega)^\frac{1}{k-1}$ and $R_{\max}$ is a constant independent of $h_i.$
\end{lemma}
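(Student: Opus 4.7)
My proof plan is to relate the simplicial constant $\alpha_i$ to the \emph{width} of the normal cone $N_P(h_i)$ via a separating-hyperplane/projection argument, and then convert a lower bound on the width into a lower bound on the solid angle $\omega_i$ by inscribing a circular cone and invoking the spherical-cap lower bound in Lemma~\ref{lem:area-spherical-cap-2}.

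Concretely, let $P' = \conv(\ext(P)\setminus\{h_i\})$ and let $x^\star$ be the Euclidean projection of $h_i$ onto $P'$, so that $\|h_i - x^\star\|_2 = \alpha_i$. Set $\theta^\star = (h_i - x^\star)/\alpha_i$. The standard optimality condition for projection onto a convex set yields $\theta^{\star T}(x^\star - y) \ge 0$ for every $y\in P'$, and in particular $\theta^{\star T}(h_i - h_j) \ge \alpha_i$ for every extreme point $h_j \neq h_i$. This identifies $\theta^\star$ as the ``axis'' of $N_P(h_i)$ along which $h_i$ is separated by a margin of at least $\alpha_i$ from every other extreme point.

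Next I would show that $N_P(h_i)$ contains the whole circular cone around $\theta^\star$ of half-angle $\phi_0 = \arctan(\alpha_i/R_{\max})$, where $R_{\max} = \diam(P)$ is a constant independent of $h_i$. Writing any unit vector as $\theta = (\cos\phi)\theta^\star + (\sin\phi)v$ with $v\perp\theta^\star$, $\|v\|_2=1$, we have
$$
\theta^T(h_i - h_j) \ge (\cos\phi)\alpha_i - (\sin\phi)\|h_i - h_j\|_2 \ge (\cos\phi)\alpha_i - (\sin\phi)R_{\max},
$$
which is non-negative whenever $\tan\phi \le \alpha_i/R_{\max}$; such $\theta$ therefore lie in $N_P(h_i)$.

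The inscribed circular cone intersects the sphere in a spherical cap of chordal radius $2\sin(\phi_0/2)$, so Lemma~\ref{lem:area-spherical-cap-2} (with ambient dimension $k$) gives
$$
\omega_i \ge \tfrac{1}{2}\bigl(\sin(\phi_0/2)\bigr)^{k-1}, \qquad \text{i.e.,} \qquad \sin(\phi_0/2) \le (2\omega_i)^{1/(k-1)} = \tfrac{1}{2}r(\omega_i).
$$
The plan is then purely trigonometric: with $s := \sin(\phi_0/2)$, the identity $\tan\phi_0 = \frac{2s\sqrt{1-s^2}}{1-2s^2}$ is monotone increasing on $[0,1/\sqrt{2})$, so substituting $s \le \tfrac{1}{2}r(\omega_i)$ and using $\alpha_i = R_{\max}\tan\phi_0$ yields the stated inequality
$$
\alpha_i \le R_{\max}\,\frac{r(\omega_i)\sqrt{1-\tfrac14 r(\omega_i)^2}}{1-\tfrac12 r(\omega_i)^2}.
$$
The bound is vacuous once $r(\omega_i)\ge\sqrt{2}$, so we only need to consider $\omega_i$ small enough that $\phi_0<\pi/2$, where all quantities are well defined. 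The only nontrivial step I expect is the circular-cone inscription in $N_P(h_i)$; the rest is bookkeeping, and the use of $R_{\max} = \diam(P)$ keeps the constant independent of $h_i$ as claimed.
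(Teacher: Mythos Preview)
Your argument is correct and follows essentially the same route as the paper: both choose the axis $\theta^\star=-a$ determined by the projection of $h_i$ onto $\conv(\ext(P)\setminus\{h_i\})$, inscribe a circular cone in $N_P(h_i)$ along that axis (the paper phrases this dually, as circumscribing $T_P(h_i)$ by a circular cone $K$ and passing to $K^\circ$), and then invoke Lemma~\ref{lem:area-spherical-cap-2} to convert the cap area into the bound on $r(\omega_i)$. The only substantive difference is the constant: the paper takes $R_{\max}=\diam(B_P(h_i))$ via a slice argument, which is slightly sharper but $h_i$-dependent, whereas your choice $R_{\max}=\diam(P)$ is looser but genuinely independent of $h_i$, matching the lemma statement directly (and matching the paper's own Remark that $\diam(B_P(h_i))$ is bounded by such a quantity).
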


\begin{remark}
Though we present Lemma \ref{lem:simplifical-constant-1} with a constant dependent on the geometry of the polytope, we observe that this constant is bounded by a quantity that is independent of $h_i$ and depends only on its ``base.'' Such geometric dependence is necessary because $\omega$ is scale invariant while $\alpha_i$ is not. In fact, $\alpha_i$ and $\diam{\left(B_P(h_i)\right)}$ depends on scale in the same manner as $\alpha_i$ does, and thus implicitly adds the appropriate scaling to our bound.
\end{remark}

\begin{lemma}
\label{lem:simplifical-constant-2}
Let $P\subset\reals^k$ be a (convex) polytope and $h_i\in\ext(P)$. If the simplicial constant is $\alpha_P(h_i) = \alpha_i$, then
\[
\omega \left( N_P(h_i) \right) \le \left(\frac{\sqrt{\alpha_i^2 + \left(r_{\min}\right)^2}}{2r_{\min}}\right)^k
\]
when
\[
\frac{\left(r_{\min}\right)^2}{\alpha_i^2 + \left(r_{\min}\right)^2} \ge \frac12,
\]
and where $r_{\min}$ is a constant that depends on geometric properties of the polytope.
\end{lemma}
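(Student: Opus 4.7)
The plan is to trap $N_P(h_i)$ inside a circular cone whose aperture is controlled by $\alpha_i$ and $r_{\min}$, then invoke the second bound in Lemma \ref{lem:area-spherical-cap-1}. To set things up, let $y^\ast$ achieve the infimum in \eqref{eq:simplicial-constant}, so $y^\ast\in\conv(\ext(P)\setminus h_i)$ and $\|h_i-y^\ast\|_2=\alpha_i$. The constant $r_{\min}$ in the statement is taken to be (the radius of) the largest Euclidean ball centered at $y^\ast$ lying inside $\conv(\ext(P)\setminus h_i)$; this is the ``inradius at $y^\ast$'' of the opposite hull and encodes the geometric properties of $P$ alluded to in the statement.

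With this setup, convexity of $P$ gives $C := \conv\bigl(\{h_i\}\cup B(y^\ast,r_{\min})\bigr)\subseteq P$, and $C$ is a right circular cone with apex $h_i$, axis along $y^\ast-h_i$, and half-angle $\phi=\arctan(r_{\min}/\alpha_i)$. Translating so that $h_i$ sits at the origin, the tangent cone $T_P(h_i)=\cone(P-h_i)$ therefore contains the infinite circular cone with that same axis direction and half-angle $\phi$. Since polarity reverses inclusion and maps a circular cone with axis $a$ and half-angle $\phi$ to one with axis $-a$ and half-angle $\tfrac{\pi}{2}-\phi$, it follows that
$$
N_P(h_i) = T_P(h_i)^\circ \subseteq \{w\in\reals^k : \hat a^T w \ge t\,\|w\|_2\},\qquad \hat a = \frac{h_i-y^\ast}{\|h_i-y^\ast\|_2},\quad t=\sin\phi=\frac{r_{\min}}{\sqrt{\alpha_i^2+r_{\min}^2}}.
$$
The hypothesis $r_{\min}^2/(\alpha_i^2+r_{\min}^2)\ge 1/2$ is precisely $t\ge 1/\sqrt{2}$, which places us in the regime of the second bound of Lemma \ref{lem:area-spherical-cap-1}. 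Writing the solid angle of the containing circular cone as the normalized area of a spherical cap of height $t$ in $\bS^{k-1}$ then yields
$$
\omega(N_P(h_i)) \le \left(\frac{1}{2t}\right)^k = \left(\frac{\sqrt{\alpha_i^2+r_{\min}^2}}{2\,r_{\min}}\right)^k,
$$
as claimed.

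The main obstacle is the first step: giving a rigorous, dimension-consistent definition of $r_{\min}$, specifying whether the inscribed ball is taken in the ambient $\reals^k$ or in the affine hull of $\conv(\ext(P)\setminus h_i)$, and verifying that such a ball always produces an inscribed solid cone of exactly the advertised half-angle. The edge case where the opposite hull is not full-dimensional in $\reals^k$ needs some care. Once that geometric setup is nailed down, the polar-cone manipulation and the cap-area estimate are routine.
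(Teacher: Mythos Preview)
Your approach is essentially the paper's: inscribe a circular cone in $T_P(h_i)$, pass to the polar so that $N_P(h_i)$ sits inside a circular cone of complementary angle, and then apply Lemma~\ref{lem:area-spherical-cap-1}. The paper takes $K$ to be the \emph{largest} circular cone contained in $T_P(h_i)$ and defines $r_{\min}$ via two-dimensional slices of the tangent cone at the level set through the projection point; you instead build a specific inscribed cone from an inscribed ball at $y^\ast$ and take $r_{\min}$ to be its radius. Both are legitimate instantiations of the unspecified geometric constant in the statement, and the polar/cap computation is identical.

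One remark on the point you flag at the end: for a full-dimensional ball $B(y^\ast,r_{\min})\subset\reals^k$ at distance $\alpha_i$ from the apex, the cone from $h_i$ over it has half-angle $\arcsin(r_{\min}/\alpha_i)$, not $\arctan(r_{\min}/\alpha_i)$. The $\arctan$ formula you wrote is correct precisely when the ball is a $(k{-}1)$-disk in the hyperplane through $y^\ast$ orthogonal to $h_i-y^\ast$; this does hold automatically when $y^\ast$ lies in the relative interior of $\conv(\ext(P)\setminus h_i)$, since then $h_i-y^\ast$ is orthogonal to that set's affine span. Resolving your stated ambiguity in favor of the disk (within the affine hull) is what recovers the paper's angle $\arctan(r_{\min}/\alpha_i)$ and hence the exact exponent in the statement.
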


\begin{remark}
Similar to the situation for $R_{\max},$ $r_{\min}$ depends on geometric properties of the polytope.
\end{remark}

To our knowledge, Lemmas \ref{lem:simplifical-constant-1} and \ref{lem:simplifical-constant-2} are new. The constants $R_{\max}$ in Lemma \ref{lem:simplifical-constant-1} and $r_{\min}$ in \ref{lem:simplifical-constant-2} are non-optimal but their dependence on $P$ is unavoidable since normal cones are scale invariant, but simplicial constants are not. Although sharper bounds on the area of spherical caps are known,\footnote{In fact, exact expressions in terms of the hypergeometric function or the regularized incomplete beta function are known.} we state our results in the aforementioned form for the sake of clarity.

In the literature on NMF, a common assumption is the simplical constant of any extreme point is at least some $\alpha > 0$. By Lemma \ref{lem:simplifical-constant-1}, the simplicial constant being at least $\alpha$ implies
\begin{align*}
\min_i\,\omega_i &\ge \frac12\left( \frac{1-\sin\left(\arctan\left(R_{\max}/\alpha\right)\right)}{2}\right)^{\frac{k-1}{2}} \\
&= \frac12\left( \frac12 - \frac{R_{\max}/\alpha}{2\sqrt{1 + (R_{\max}/\alpha)^2}}\right)^{\frac{k-1}{2}}.
\end{align*}
The relationship between solid angles and simplicial constants is often obscure, and in the rest of the paper, we state results in terms of solid angles $\omega_1,\dots,\omega_k$.

Before we move on to develop variants of the proto-algorithm, we comment on its computational cost in a distributed setting. On distributed computing platforms, communication between the nodes is the major computational cost. Algorithms that make few passes over the data may be substantially faster in practice, even if they require more flops. As we shall see, it is possible to perform NMF or archetypal analysis with just \emph{two passes} over the data.

Consider a typical distributed setting: the data consists of $n$ data points distributed across $D$ nodes of a large cluster. Let $I_d\subset [n]$ be the indicies of the data points stored on the $d$-th node. To perform NMF or archetypal analysis, each node evaluates (random) linear functions on the data points stored locally and returns (i) the indices of the data points that maximize and minimize the linear functions $I_{d,\max},I_{d,\min}\subset I_d$ and (ii) the optimal values. Each node evaluates \emph{the same} set of linear functions on its local data points, so the optimal values are comparable. A node collects the optimal values and finds the maximum and minimum values to find the extreme points. We summarize the distributed proto-algorithm in algorithm \ref{alg:distributed-proto-algorithm}. While we present the algorithm here under the assumption that each node contains a subset of the data points, it is equally amenable to parallelization in the situation where each node contains a subset of the features for all of the data point.

\begin{algorithm}
\caption{Proto-algorithm (distributed)}
\label{alg:distributed-proto-algorithm}
\begin{algorithmic}[1]
\State Choose a random seed and distribute it to all nodes.
\For{$d = 1,\ldots,D$} in parallel
\State Generate a $p\times m$ random matrix $G$ with independent $\cN(0,1)$ entries.
\State Form the product $X_{I_d}G$.
\For{$i = 1,\ldots,n$}
\State Let $\left(V_{i,d,\max},I_{i,d,\max}\right)$ and $\left(V_{i,d,\min},I_{i,d,\min}\right)$ denote pairs of the max and min values in column $i$ of $X_{I_d}G$ and the corresponding index.
\EndFor
\EndFor
\For{$i = 1,\ldots,n$}
\State Let $I_{\max}'$ and $I_{\min}'$ denote the indices of the max and min values in the sets $\left\{ \left(V_{i,d,\max},I_{i,d,\max}\right) \right\}_{i=1}^{d}$ and $\left\{ \left(V_{i,d,\min},I_{i,d,\min}\right) \right\}_{i=1}^{d},$ respectively.
\State Set $I_{\max}\gets I_{\max}\cup I_{\max}',\,I_{\min}\gets I_{\min}\cup I_{\min}'$.
\EndFor
\State Return $H = X_{I_{\max}\cup I_{\min}}$.
\end{algorithmic}
\end{algorithm}

The algorithm makes a \emph{single pass} over the data: each node makes a single pass over its (local) data points to evaluate the linear functions. The subsequent operations are performed on the indices $I_{d,\min},I_{d,\max}$ and optimal values and do not require accessing the data points. The communication costs are also minimal. As long as the nodes are set to produce the same stream of random numbers, the linear functions don't need to be communicated. The only information that must be centrally collected are the pairs of values and indices for the maximum and minimum values in each column of the distributed matrix product.

The proto-algorithm finds the extreme points of the point cloud. We obtain the coefficients $W\in\reals^{n\times k}$ that expresses the data points in terms of the extreme points by solving \eqref{eq:nnls}. The NNLS problem \eqref{eq:nnls} is separable across the rows of $W.$ Thus it suffices to solve $D$ small NNLS problems: each node solves a NNLS problem on  the data points stored locally to obtain the coefficients that represent its (local) data point in terms of the extreme points. Solving the NNLS problem requires a second pass over the data. Thus it is possible to perform archeypal analysis or NMF with two passes over the data.

\subsection{Three practical algorithms}

The proto-algorithm requires the non-negative rank $k$ and the condition number $\kappa$ to be known \emph{a priori} (to set $m$ correctly). In this section, we describe  three practical algorithms: one for noiseless $X$ and two for noisy $X$. When $X$ is noiseless, we seek to recover all the extreme points, no matter how subtly a point protrudes from the point cloud.

\begin{algorithm}
\caption{Noiseless algorithm}
\label{alg:noiseless}
\begin{algorithmic}[1]
\Require $X\in\reals^{n\times p}$
\State Set $I_{\max} = I_{\min} = \emptyset$.
\Repeat
\State Generate a $p\times m$ random matrix $G$ with independent $\cN(0,1)$ entries.
\State Form the product $XG$.
\State Find the indices of the max $I_{\max}'$ and min $I_{\min}'$ in each column of $XG$.
\State Set $I_{\max}\gets I_{\max}\cup I_{\max}',\,I_{\min}\gets I_{\min}\cup I_{\min}'$.
\Until{$I_{\max}',\,I_{\min}'$ adds nothing to $I_{\max}',\,I_{\min}'$.}
\State Return $H = X_{I_{\max}\cup I_{\min}}$.
\end{algorithmic}
\end{algorithm}

The noiseless algorithm stops when $m$ optimization find no missed extreme points ($m$ failures). This stopping rule admits an \emph{a posteriori} estimate of the size of the normal cone at any missed extreme point. Consider each optimization as a Bernoulli trial with $p = 2\sum_{i\in I_\text{miss}}\omega_i$ (success is finding a missed extreme point). The noiseless algorithm stops when we observe $m$ failures.
A $1-\alpha$ confidence interval for $p$ is
\[
\sum_{i\in I_\text{miss}} \omega_i \le \frac{1}{2m}\log\left(\frac{1}{\alpha}\right)\text{ with probability }1-\alpha.
\]

\begin{lemma}
\label{lem:noiseless-algorithm}
The noiseless algorithm finds all extreme points with $\omega_i \ge \frac{1}{2m}\log\left(\frac{1}{\delta}\right)$ with probability at least $1-\delta$.
\end{lemma}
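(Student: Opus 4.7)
The plan is to obtain the lemma as an immediate contrapositive of the confidence interval stated in the paragraph just above it. That interval asserts that at termination, $\sum_{i \in I_{\text{miss}}} \omega_i \le \frac{1}{2m}\log(1/\delta)$ holds with probability at least $1-\delta$. On this high-probability event no single $\omega_i$ with $i \in I_{\text{miss}}$ can exceed $\frac{1}{2m}\log(1/\delta)$, since a single nonnegative term cannot surpass the sum bounding it. Taking the contrapositive, every extreme point $h_i$ with $\omega_i \ge \frac{1}{2m}\log(1/\delta)$ must lie in $I_{\max} \cup I_{\min}$, which is precisely the claim of the lemma.

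The one substantive ingredient is the confidence interval itself. I would justify it by reducing to independent Bernoulli trials. As in the proof of Theorem \ref{thm:exact-recovery}, for a single Gaussian direction $g \sim \cN(0,I)$ the events $\{g \in N_P(h_i)\}$ and $\{g \in -N_P(h_i)\}$ are disjoint across all extreme points (outside a measure-zero set), so the per-trial probability that some missed $h_i$ is hit is exactly $p = 2\sum_{i \in I_{\text{miss}}} \omega_i$. The algorithm stops only after $m$ consecutive failures of this Bernoulli process, so on the event that $p$ at termination exceeds some $p^\ast$, those $m$ failures have joint probability at most $(1-p^\ast)^m \le e^{-m p^\ast}$; setting $p^\ast = \frac{1}{m}\log(1/\delta)$ makes this bound equal to $\delta$, which is the stated interval.

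The main obstacle is that $p$ is itself a random, non-increasing function of trial index and the stopping time is random too, so ``the terminal $m$ failures'' are not a canonical set of iid trials whose joint probability one can read off directly. I would handle this by exploiting monotonicity of $I_{\text{miss}}^{(t)}$: it takes at most $k+1$ distinct values along the run, so one can sum the $(1-p^\ast)^m$ bound over the reachable ``high-$p$'' states and conclude by a union bound. The resulting factor of at most $k+1$ only inflates the threshold by an additive $\log(k+1)$ inside the logarithm and does not change the form of the lemma.
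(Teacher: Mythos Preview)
The paper does not give a standalone proof of this lemma; it simply states the confidence interval
\[
\sum_{i\in I_{\text{miss}}}\omega_i \le \frac{1}{2m}\log\left(\frac{1}{\alpha}\right)\quad\text{with probability }1-\alpha
\]
and then records the lemma as its immediate consequence. Your proposal follows exactly this intended route: deduce the lemma from the confidence interval by noting that a nonnegative summand cannot exceed the sum, and justify the confidence interval itself via the Bernoulli-trial interpretation already used in Theorem~\ref{thm:exact-recovery}.

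Where you go beyond the paper is in flagging the random-stopping-time issue: $I_{\text{miss}}$ is data-dependent, so the ``final $m$ failures'' are not a fixed block of i.i.d.\ trials. The paper silently treats $p$ as fixed when writing down the confidence interval, so this subtlety is simply elided there. Your proposed fix (union bound over the at most $k+1$ possible values of $I_{\text{miss}}$, costing an additive $\log(k{+}1)$ inside the logarithm) is a reasonable way to make the argument rigorous, though strictly speaking it yields a slightly weaker constant than the one stated. In short, your approach matches the paper's, and you are more careful about a point the paper glosses over.
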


In the presence of noise, we seek to select ``true'' extreme points and discard spurious extreme points created by noise. Since optimizing linear functions over the point cloud gives both true and spurious extreme points, we propose two approaches to selecting extreme points.

The first approach is based on the assumption that spurious extreme points protrude subtly from the point cloud. Thus the normal cones at spurious extreme points are small, and these points are less likely to be found by optimizing linear functions over the point cloud.  This suggests a simple approach to select extreme points: keep the points that are found most often.

The second approach is to select extreme points by sparse regression. Given a set of extreme points (rows of $H$), we solve a group lasso problem (each group corresponds to an extreme point) to select a subset of the points:
\BEQ
\label{eq:group-lasso}
\begin{aligned}
& \minimize_W & & \frac12\norm{X - WH}_F^2 + \lambda\sum_{i=1}^k\norm{w_i}_2 \\
& \subjectto & & W\ge 0.
\end{aligned}
\EEQ
where $\lambda$ is a regularization parameter that trades-off goodness-of-fit and group sparsity. The group lasso was proposed by \cite{yuan2006model} to select groups of variables in (univariate) regression and extended to multivariate regression by \cite{obozinski2011support}. Recently, \cite{kim2012group} propose a similar optimization problem for NMF.

We enforce a non-negativity constraint to keep $W$ non-negative. Although seemingly innocuous, most first-order solvers cannot handle the nonsmooth regularization term and the non-negativity constraint together. Fortunately, a simple reformulation allows us to use off-the-shelf first-order solvers to compute the regularization path of \eqref{eq:group-lasso} efficiently. The reformulation hinges on a key observation.

\begin{lemma}
\label{lem:projection-onto-intersection}
The projection of a point $x\in\reals^p$ onto the intersection of the second-order cone $K_2^p = \{x\in\reals^p\mid\norm{x_{[p-1]}}_2 \le x_p\}$ and the non-nega\-tive orthant $\reals_+^p$ is given by
\[
P_{K_2^p\,\cap\,\reals_+^p}(x) = P_{K_2^p}\left(P_{\reals_+^{p-1}\times\reals}\left(x\right)\right).
\]
\end{lemma}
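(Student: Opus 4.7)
The plan is to verify the variational inequality characterization of projection directly. Recall that $z = P_C(x)$ for a closed convex set $C$ iff $z \in C$ and $\langle x - z, c - z\rangle \le 0$ for every $c \in C$. Write $C = K_2^p \cap \reals_+^p$, denote $\tilde{x} = P_{\reals_+^{p-1}\times\reals}(x)$ (so $\tilde{x}_i = \max(x_i,0)$ for $i < p$ and $\tilde{x}_p = x_p$), and let $z^\star = P_{K_2^p}(\tilde{x})$. I will show $z^\star \in C$ and that $z^\star$ satisfies the variational inequality at $x$.

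Writing $u = \tilde{x}_{[p-1]}$ and $t = x_p$, the closed-form projection onto $K_2^p$ in each of its three regimes produces $z^\star_{[p-1]} = \alpha u$ for some scalar $\alpha \in [0,1]$ and $z^\star_p \ge 0$. Since $u \ge 0$ componentwise by construction, this gives $z^\star \in \reals_+^p$, hence $z^\star \in C$. The same formula yields the key \emph{complementarity} observation: $z^\star_i > 0$ forces $u_i > 0$, which forces $x_i > 0$, and hence $(x - \tilde{x})_i = 0$. I expect this to be the main obstacle, in the sense that it is the nontrivial structural fact driving the result; once stated it is immediate from the explicit SOC projection formula, but it is what allows the two projections to compose correctly.

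For the variational inequality, pick any $c \in C$ and decompose
\[
\langle x - z^\star, c - z^\star \rangle = \langle x - \tilde{x}, c - z^\star\rangle + \langle \tilde{x} - z^\star, c - z^\star\rangle.
\]
The second inner product is non-positive by the definition of $z^\star = P_{K_2^p}(\tilde{x})$ together with $c \in K_2^p$. For the first, $(x - \tilde{x})_p = 0$, while for $i < p$ we have $(x - \tilde{x})_i = \min(x_i,0) \le 0$. On coordinates $i < p$ with $x_i \ge 0$ this contribution vanishes; on coordinates with $x_i < 0$ the complementarity observation gives $z^\star_i = 0$ and $c_i \ge 0$, so $(x - \tilde{x})_i (c_i - z^\star_i) = x_i c_i \le 0$. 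Summing, the first inner product is also non-positive, establishing the variational inequality and completing the argument.
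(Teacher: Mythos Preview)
Your argument is correct. Both your proof and the paper's rest on the same structural fact---that the second-order-cone projection scales the first $p-1$ coordinates by a nonnegative scalar and hence preserves the zero pattern of $\tilde{x}$---but you deploy it through a different characterization of projection. The paper uses the Moreau decomposition for cones: it writes
\[
x = P_{K_2^p}\bigl(P_{\reals_+^{p-1}\times\reals}(x)\bigr) + P_{-K_2^p}\bigl(P_{\reals_+^{p-1}\times\reals}(x)\bigr) + P_{(\reals_+^{p-1}\times\reals)^\circ}(x),
\]
argues the three pieces are mutually orthogonal (the nontrivial orthogonality between the last piece and each of the first two is exactly your complementarity observation in disguise), and then checks that the first piece lies in $K_2^p\cap\reals_+^p$ while the sum of the last two lies in the polar cone. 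You instead verify the variational inequality $\langle x - z^\star, c - z^\star\rangle \le 0$ directly by splitting $x - z^\star$ into $(x-\tilde{x}) + (\tilde{x}-z^\star)$. Your route is slightly more elementary: it avoids identifying $(K_2^p\cap\reals_+^p)^\circ$ and does not invoke Moreau's theorem, working instead with the projection characterization valid for any closed convex set. The paper's route, on the other hand, makes the orthogonal three-way split explicit, which is convenient if one later needs the residual structure. Either way, the crux is the same sign-preservation property of $P_{K_2^p}$ that you flagged.
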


Although we cannot find Lemma \ref{lem:projection-onto-intersection} in the literature, this result is likely known to experts. For completeness, we provide a proof in the Appendix. We formulate \eqref{eq:group-lasso} as a second-order cone program (SOCP) (with a quadratic objective function):
\[
\begin{aligned}
& \minimize_{W,t} & & \frac12\norm{X - WH}_F^2 + \lambda\sum_{i=1}^k t_i \\
& \subjectto & & \norm{w_i}_2 \le t_i,i = 1,\dots k \\
& & & W\ge 0.
\end{aligned}
\]
Since $t_i,i = 1,\dots k$ are non-negative, the problem is equivalent to
\BEQ
\begin{aligned}
& \minimize_{W,t} & & \frac12\norm{X - WH}_F^2 + \lambda\sum_{i=1}^k t_i \\
& \subjectto & & (w_i,t_i)\in K_2^{n+1}\cap\reals_+^{p+1},i= 1,\dots k.
\end{aligned}
\label{eq:group-lasso-socp-form}
\EEQ
Since the we know how to projection onto the feasible efficiently, most off-the-shelf first-order solvers (with warm-starting) are suited to computing the regularization path of \eqref{eq:group-lasso-socp-form}.

In practice, the non-negative rank $k$ is often unknown. Fortunately, both approaches to selecting extreme points also give estimates for the (non-negative) rank. In the greedy approach, an ``elbow'' on the scree plot of how often each extreme point is found indicates how many extreme points should be selected. In the group lasso approach, persistence of groups on the regularization path indicates which groups correspond to ``important'' extreme points; \ie{} extreme points that are selected by the group lasso on large portions of the regularization path should be selected.

\section{Simulations}
\label{sec:simulations}

We conduct simulations to
\BNUM
\item validate our results on exact recovery by archetype pursuit.
\item evaluate the sensitivity of archetype pursuit to noise.
\ENUM

\subsection{Noiseless}
To validate our results on exact recovery, we form matrices that we know admit a separable NMF and use our algorithm to try and find the matrix $H.$ We construct one example to be what we consider a well conditioned matrix, \ie{} all of the normal cones are large, and we construct another example where the matrix is ill conditioned, \ie{} some of the normal cones may be small.

In order to construct matrices to test the randomized algorithm we use the following procedure. First, we construct a $k \times p$ matrix $H$ and a $n \times k$ matrix $W$ such that $X = WH$ has a separable NMF. The matrix $W$ contains the identity matrix as its top $k \times k$ block and the remainder of its entries are drawn from uniform random variables on $[0,1]$ and then each row is normalized to sum to one. This means that given the matrix $X$ we know that the first $k$ rows of $X$ are the rows we wish to recover using our algorithm.

In Section \ref{sec:archetype-pursuit} we discussed the expected number of random linear functions that have to be used in order to find the desired rows of the matrix $X$ with high probability. To demonstrate these results we use Algorithm \ref{alg:noiseless} with various choices of $m$ and see if the algorithm yields the first $k$ rows of $X.$

To generate the plots shown here we vary $k$ and for each $k$ we vary the number of random projections used, $m.$ For each pair of $k$ and $m$ we construct matrices $W$ and $H$ 500 times, run the algorithm on the resulting $X$ and report the percentage of time that the algorithm correctly found the first $k$ rows of $X$ to be the necessary columns to form a separable NMF. For all of the experiments here we use $n=500$ and $p=1000.$

To demonstrate the algorithm on a well conditioned example we construct the matrix $H$ to have independent entries each of which are uniform on $[0,1].$ We expect the convex hull of the point cloud formed by $H$ to have reasonably sized normal cones. Figure \ref{fig:rand_test} shows the recovery percentages for this experiment as we vary $m$ and $k.$ We measure the number of random linear functions used as a factor times $k.$ To show the scaling that we expect, up to the aforementioned constant, we also plot the line $m/k = \log k.$ Finally, we plot the $95\%$ isocline. We observe that the isoclines behave like $m = k \log k$ and in fact appear to grow slightly slower. Furthermore, in this case the constant factor in the bounds appears to be very small.

\begin{figure}[ht!]
  \centering
  \includegraphics[width = 1\textwidth]{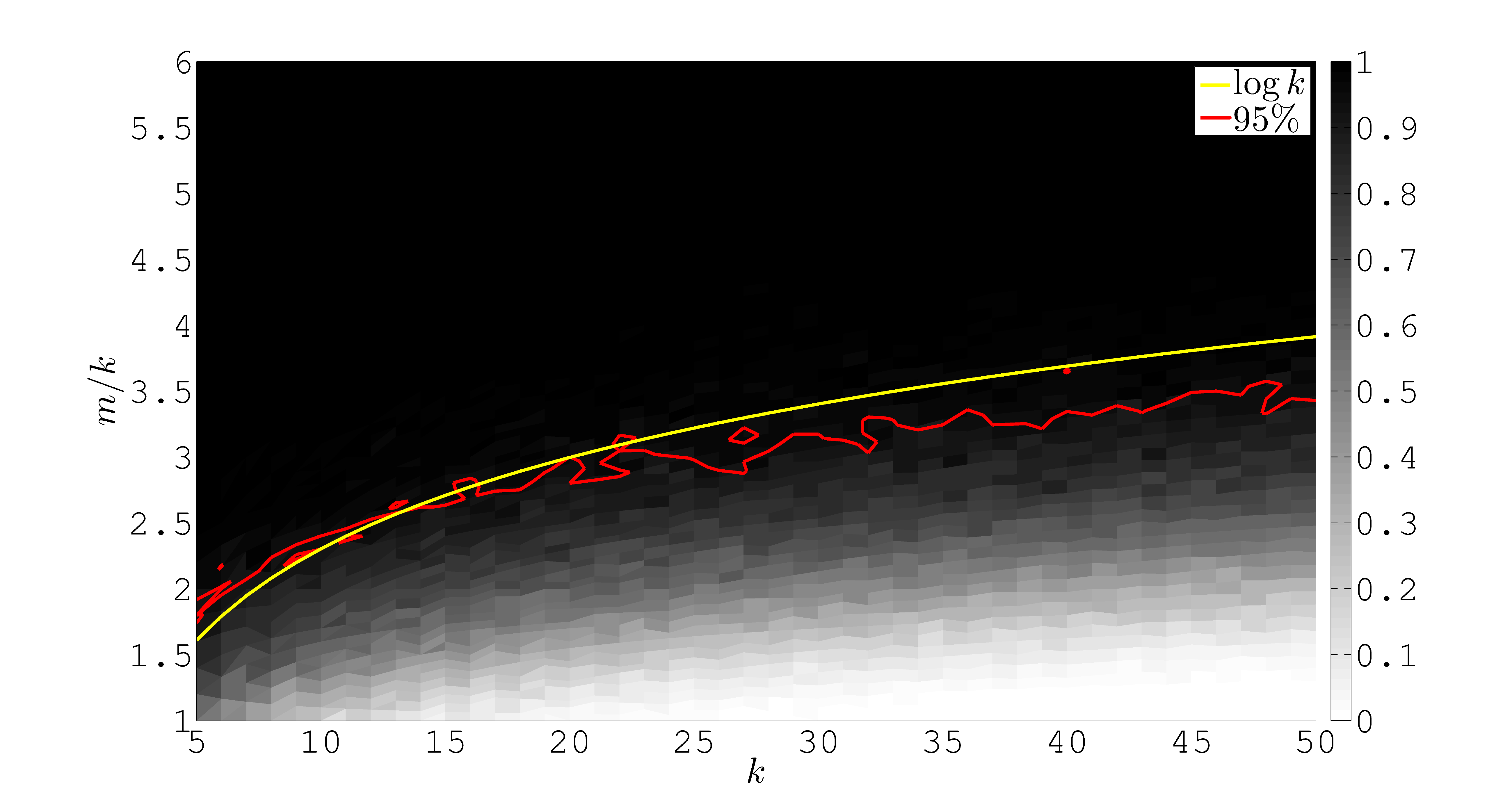}
  \caption{Percentage of experiments in which the algorithm correctly identified the first $k$ rows of $X$ as the rows of $H$ in a separable NMF. Here $H$ is a matrix with independent entries each of which is uniform on $[0,1].$}
  \label{fig:rand_test}
\end{figure}

For our poorly conditioned example we take the matrix $H$ to be the first $k$ rows of the $p \times p$ Hilbert matrix, whose $i,j$ entry is given by $\frac{1}{i+j-1}.$
This matrix is notoriously ill conditioned in the classical sense, \eg\ for a $1000 \times 1000$ Hilbert matrix the computed condition number of a matrix constructed from the first 50 rows is on the order of $10^{17}$ and may in fact be considerably larger. Because even a reasonably small subset of the leading rows of the Hilbert matrix are very close to linearly dependent we expect that the convex polytope defined by there points is very flat and thus some of the extreme points have very small normal cones. Figure \ref{fig:hilb_test} shows the recovery percentages for this experiment as we vary $m$ and $k.$ Similar to before we measure the number of random linear functions used as a factor times $k.$ Once again, to demonstrate the scaling that we expect, up to the aforementioned constant, we also plot the line $m/k = \log k.$ As before, we also plot the $95\%$ isocline.

 We observe that once again the isoclines behave like $m = k \log k,$ though in this case the constant factor is considerably larger than it was before. Given the interpretation of this experiment as trying to find the NMF of a very ill conditioned matrix we expected to observe a larger constant for complete recovery. Though, the algorithm does not require an unreasonable number of projections to recover the desired columns. In fact here we see that in order to recover the correct columns close to $95\%$ of the time we require $m$ to be slightly larger than $10 k \log k.$

\begin{figure}[ht!]
  \centering
  \includegraphics[width = 1\textwidth]{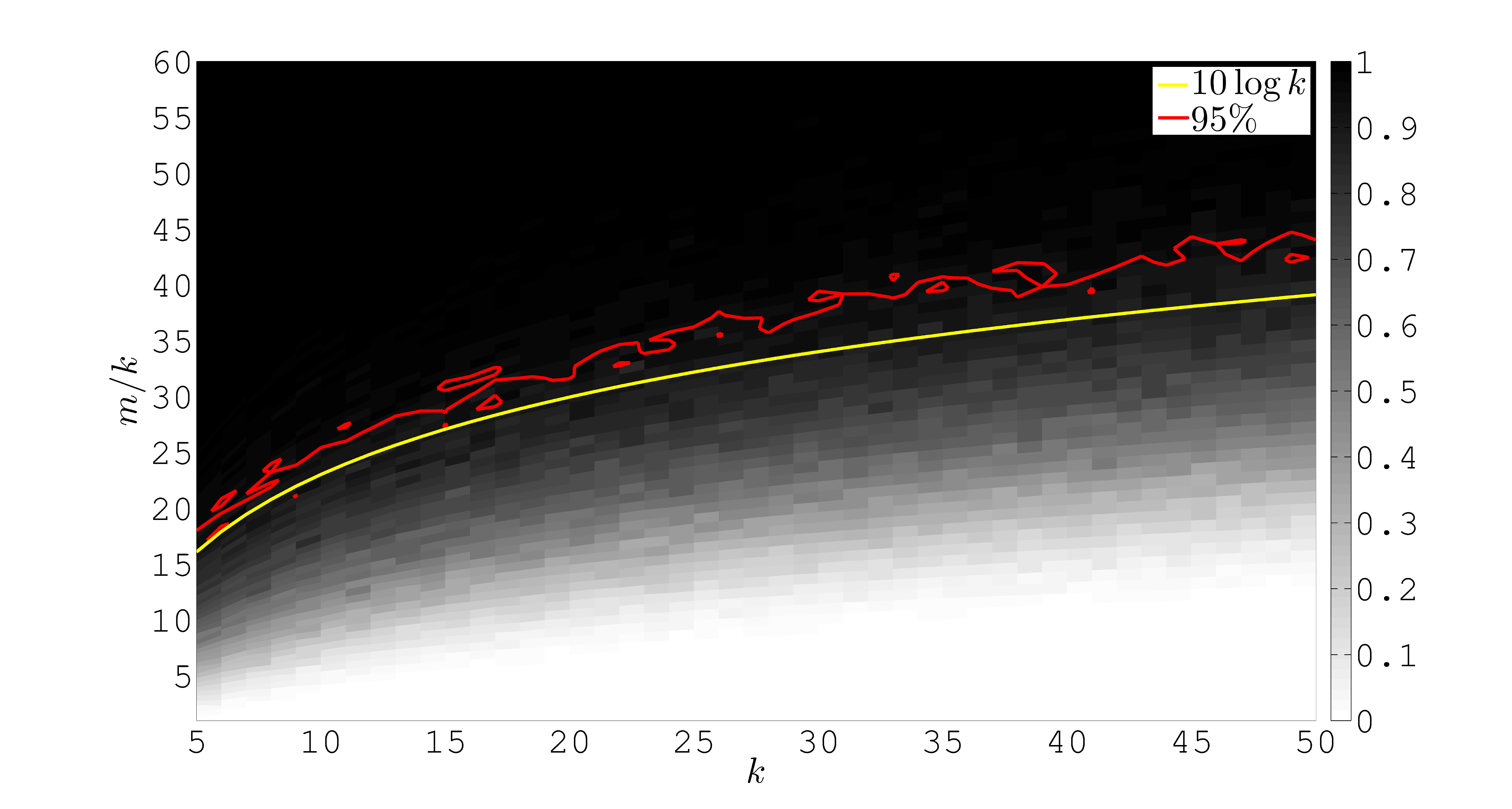}
  \caption{Percentage of experiments in which the algorithm correctly identified the first $k$ rows of $X$ as the rows of $H$ in a separable NMF. Here $H$ is the first $k$ rows of a $p \times p$ Hilbert matrix.}
  \label{fig:hilb_test}
\end{figure}

\subsection{Noisy}

We now demonstrate the performance of the algorithm when rather than being given the matrix $X=WH,$ we instead have a matrix of the form $\tilde{X} = WH + N,$ where $N$ represents additive noise.

For the first example, similar to before, we construct $H$ to be a $20 \times 1000$ matrix. However, now, similar to the experiments in \cite{gillis2012fast} we let $W^T = [I \; W_2]$ where $W_2$ is a $20 \times \binom{20}{2}$ matrix whose columns are all the possible combinations of placing a $1/2$ in two distinct rows and 0 in the remaining rows. Finally, the matrix $N$ is constructed with independent $\epsilon \cN(0,1)$ entries.

To demonstrate the performance of the algorithm on this noisy example we ran the algorithm using the majority voting scheme on matrices with varying levels of noise. We fixed the nonnegative rank to be 20 and took various values of $m$ and $\epsilon.$ For each value of $m$ and $\epsilon$ we constructed the matrices $W,$ $H,$ and $N$ as previously described. After forming the the matrix $X_N$ we ran the algorithm 50 times on the matrix. Each time the 20 most frequently found rows are collected into the rows of a matrix denoted $\tilde{H}$ and the rows of $\tilde{W}$ are computed using nonnegative least squares to try and satisfy $X=\tilde{W}\tilde{H}.$

Figure \ref{fig:noise} shows the error computed as
\[
\frac{\| X - \tilde{W}\tilde{H} \|_F}{n}
\]
for various values of $m$ and $\epsilon$ on a $\log_{10}$ scale. Each pixel represents the average error over 50 trials. We observe that as expected the overall error increases with $\epsilon,$ but that after an appropriate number of random projections the error does not significantly decay.

To complement the plot of the residual error, we demonstrate the behavior of the random voting scheme itself in the presence of noise. To do this, we construct 100 matrices $X_N$ for 10 distinct $\epsilon$ and use $m = 20 k \log{k}.$ Figure \ref{fig:scree-test} shows a sorted version of the number of times each row is found, as a fraction of the maximum number of votes a singe row received. Each row of the image represents an experiment, and each block of 100 rows corresponds to a fixed noise level. As expected we see that there is a significant drop off in votes between the 20 significant rows and the remaining columns as long as the noise is small. Once the noise becomes larger, we see that more points are becoming relevant extreme points and thus there is no longer a sharp transition at 20. One interesting note is that, because each row receives at least one vote, adding the noise has perturbed the convex polytope in a way such that all points are now extreme points.

\begin{figure}[ht!]
  \centering
  \includegraphics[width = 1\textwidth]{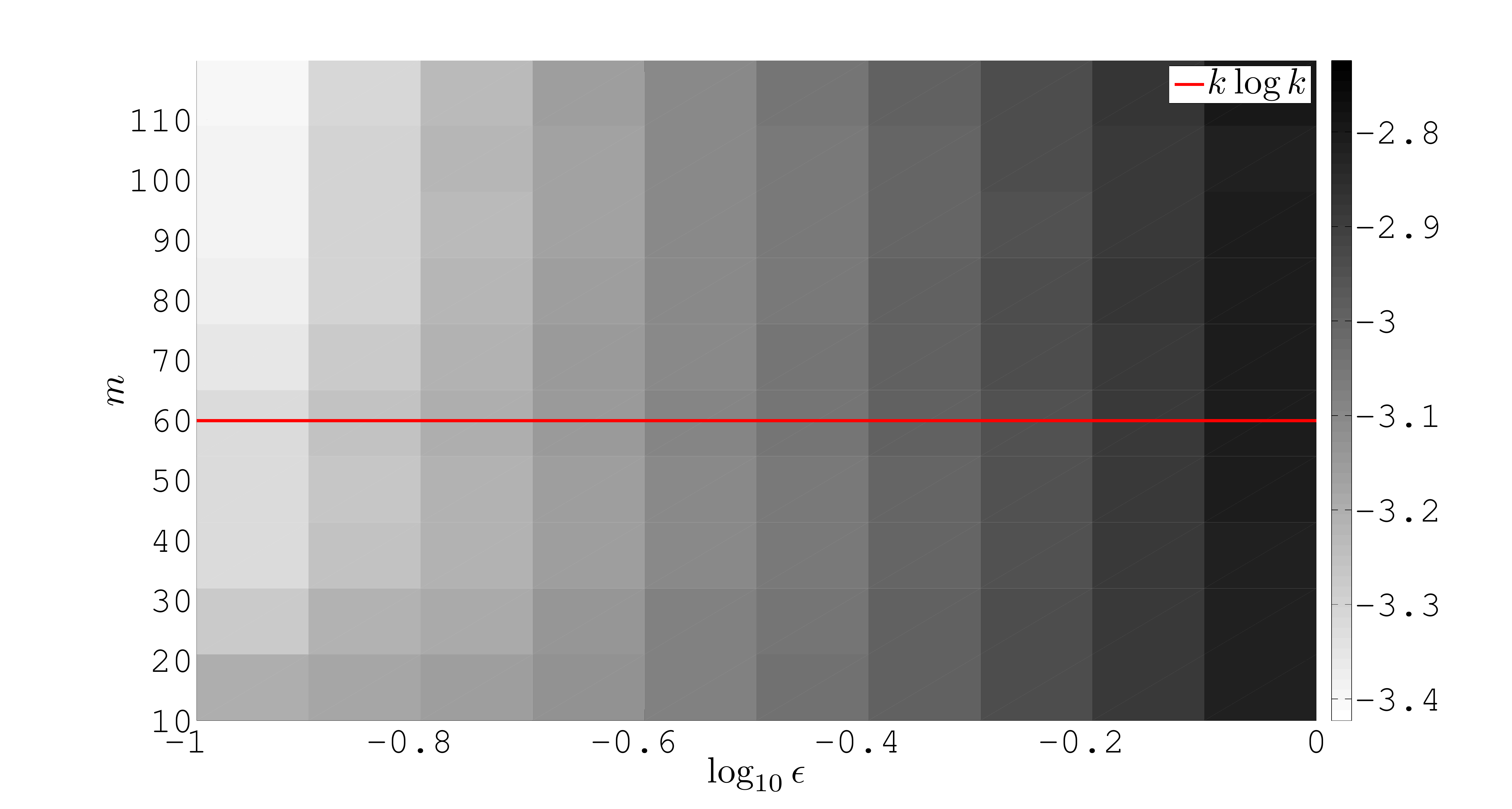}
  \caption{$\| X - \tilde{W}\tilde{H} \|_F / n$ on a $\log_{10}$ scale for various $\epsilon$ and $m$ when using a majority vote scheme to select $\tilde{H}.$}
  \label{fig:noise}
\end{figure}

\begin{figure}[ht!]
  \centering
  \includegraphics[width = 1\textwidth]{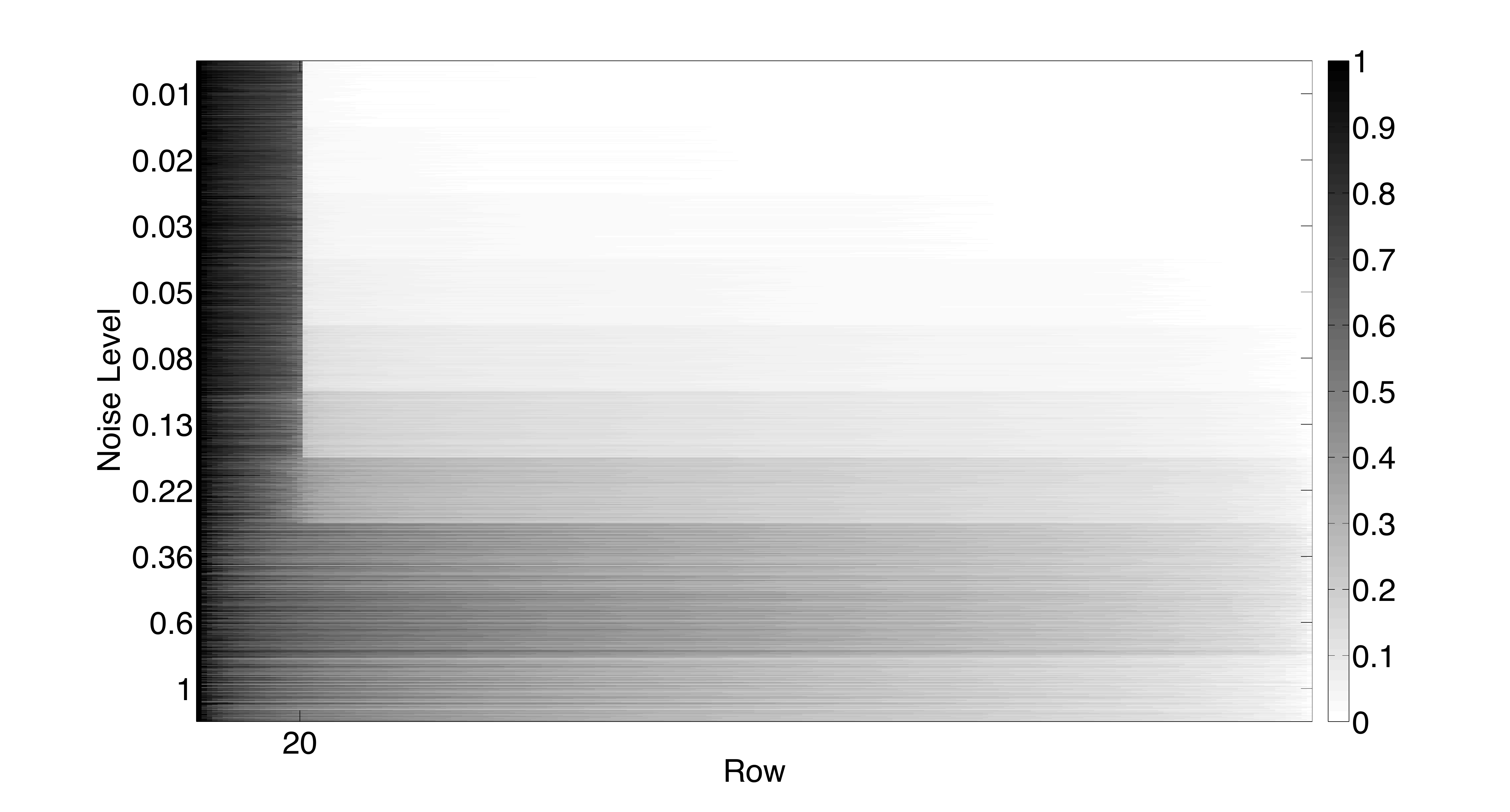}
  \caption{Number of votes received for each row, sorted and normalized by the largest number of votes received. Each row of the image represents a distinct instance of the experiment, and each block of 100 rows corresponds to a given noise level, $\epsilon$.}
  \label{fig:scree-test}
\end{figure}

Finally, the demonstrate the behavior of the algorithm when coupled with the group LASSO approach for picking rows we ran the algorithm using the same setup as for the random voting example. This means that we fixed the rank at 20 and used the group LASSO path to pick which 20 columns, of those found via the prototype algorithm, should form $\tilde{H}.$ The rows of $\tilde{W}$ are then computed using nonnegative least squares to try and satisfy $X=\tilde{W}\tilde{H}$.

Figure \ref{fig:LASSO_residual} shows the error computed as before for various values of $m$ and $\epsilon$ on a $\log_{10}$ scale. Each pixel represents the average error over 50 trials. We observe, once again, that as expected the overall error increases with $\epsilon,$ but that the algorithm is not sensitive to the number of random linear functions used. Even a small number of random linear functions is sufficient to identity the key columns.

\begin{figure}[ht!]
  \centering
  \includegraphics[width = 1\textwidth]{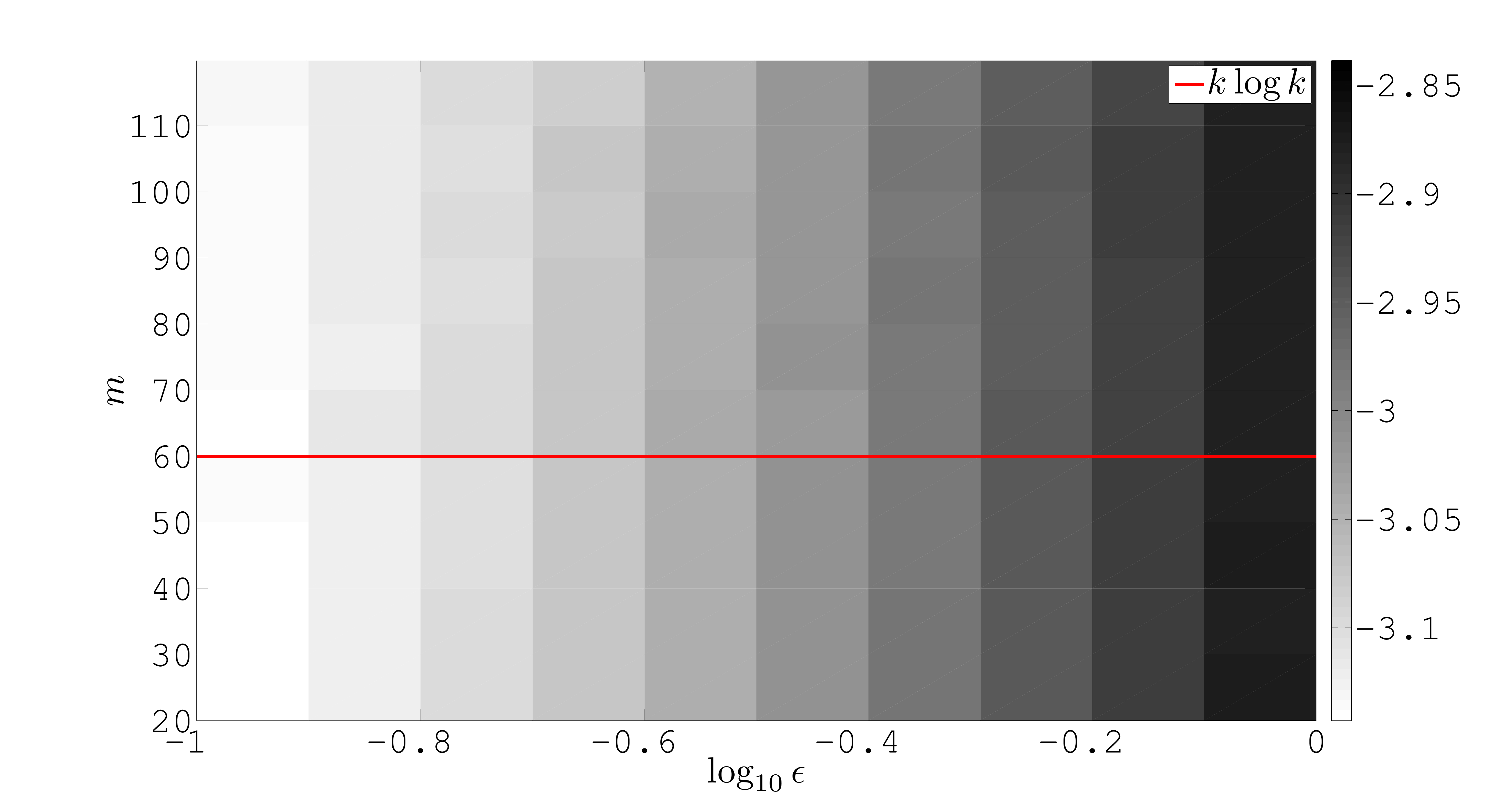}
  \caption{$\| X - \tilde{W}\tilde{H} \|_F / n$ on a $\log_{10}$ scale for various $\epsilon$ and $m$ when using the group LASSO to select the rows of $\tilde{H}.$}
  \label{fig:LASSO_residual}
\end{figure}

\section{Hyperspectral image example}
Based on the origins of PPI, we demonstrate the use of random projections for finding important pixels in a hyperspectral image. We used a hyperspectral image of the National Mall in Washington, DC \cite{landgrebe2003signal}\footnote{Available on the web at \url{engineering.purdue.edu/~biehl/MultiSpec/hyperspectral.html}}. The image is $1280 \times 307$ pixels in size, contains 191 bands and is displayed in Figure . We utilize algorithm \ref{alg:proto-algorithm} to find the important pixels in the image. Intuitively, we should find pixels that represent pure versions of each class of objects, \eg, trees, roofs, roads, etc., in the image. We then use these important pixels to broadly classify the remaining pixels in the image as each type of object. The assumption that predicate such a process is that in the image there appear to be a few key, or dominant, object classes. Figure \ref{fig:mall_scree} shows the relative frequency with which each selected extreme point is chosen. Here, we observe that there are roughly 10-15 key pixels identified by the algorithm.

\begin{figure}[ht!]
  \centering
  \includegraphics[height = \columnwidth, angle =90]{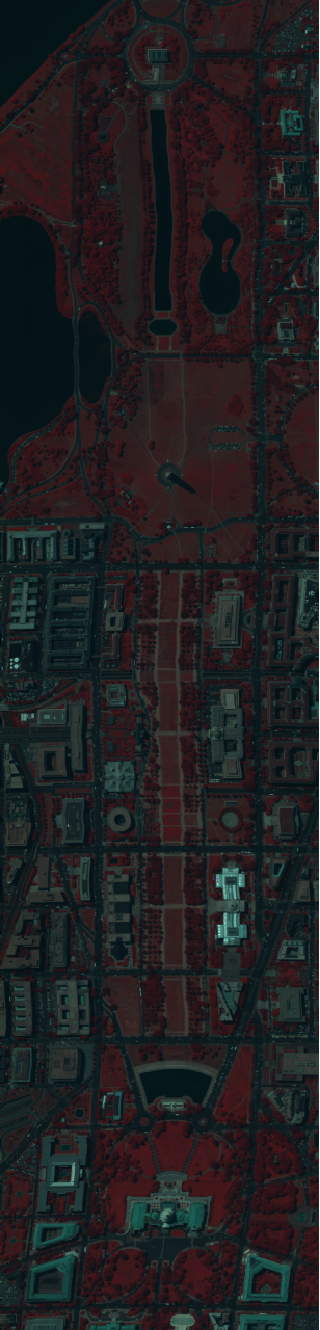}
  \caption{Hyperspectral image of the National Mall in Washington, DC. The RGB values of the image are set by choosing, for each color, a single color band.}
  \label{fig:mall_orig}
\end{figure}

\begin{figure}[ht!]
  \centering
  \includegraphics[width = .7\columnwidth]{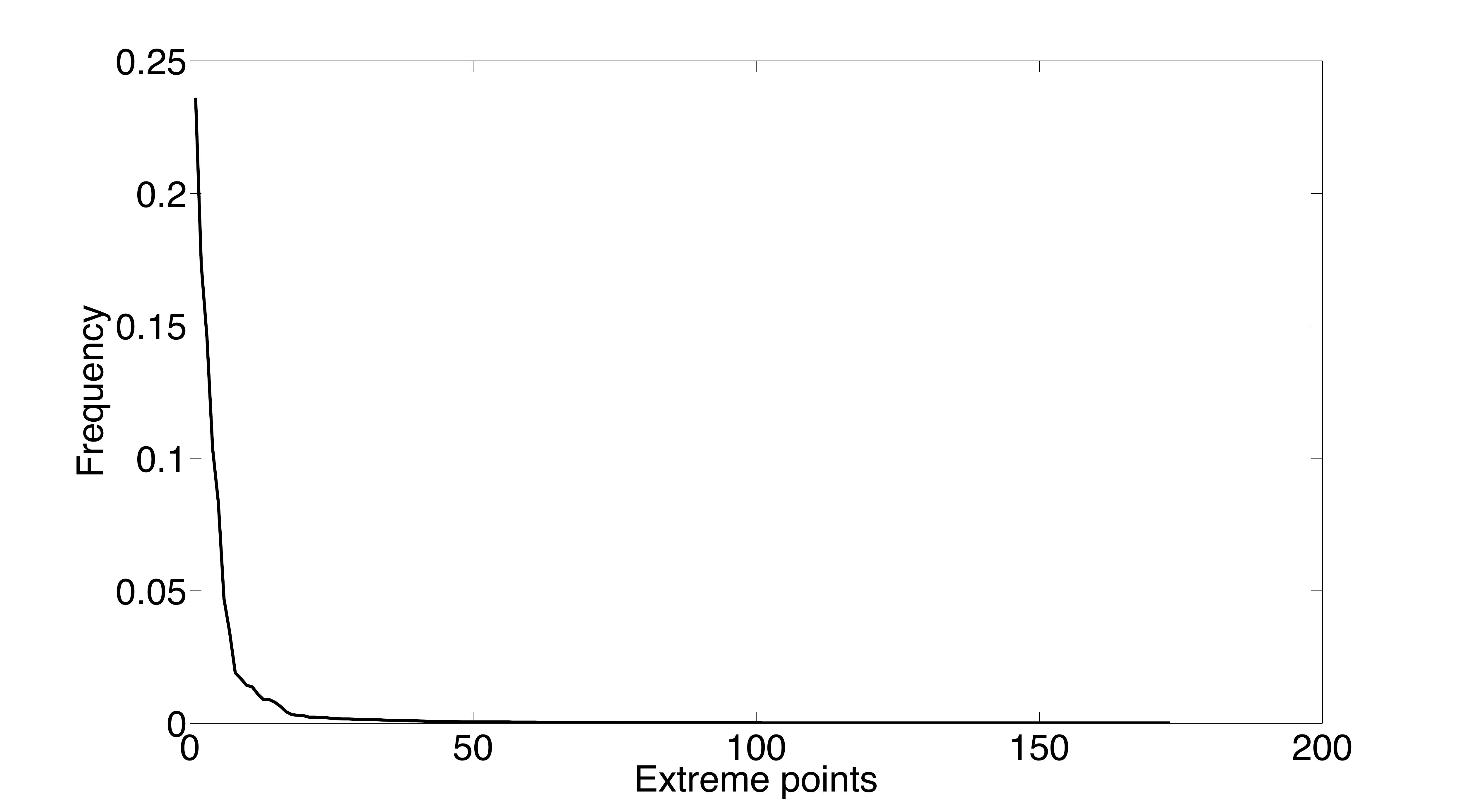}
  \caption{Relative frequency that each extreme point is selected via algorithm \ref{alg:proto-algorithm} using 5,000 random projections.}
  \label{fig:mall_scree}
\end{figure}

We now partially decompose the image using the most interpretable of the 11 most frequently selected extreme points. There are some pixels in the image that may be considered outliers, and because they are distinct from the remaining pixels they will be selected a lot. In fact, these points correspond to very pointy extreme points. These points are, in fact, important as they represent objects unlike the remainder of the image. In this situation, one example is that there appears to be a bright red light on the roof of the National Gallery of Art; such an object does not appear elsewhere in the image. However, for presentation purposes we stick to the important pixels that represent large sections of the image.

To broadly classify the image, we selected four pixels that appear to represent key features. We classify the remaining pixels by simply asking which representative pixel, of the 11 most selected, their spectrum looks most similar to in the $\ell_2$ sense. Figure \ref{fig:mall_class} shows the pixels classified into 4 categories. In each image the pixels that are classified as such are left colored and the remaining pixels are colored black. In fact, the images corresponding to the other seven pixels represent very little of the image.

\begin{figure}[ht!]
    \centering
    \subfloat[]{\includegraphics[width = .2\columnwidth]{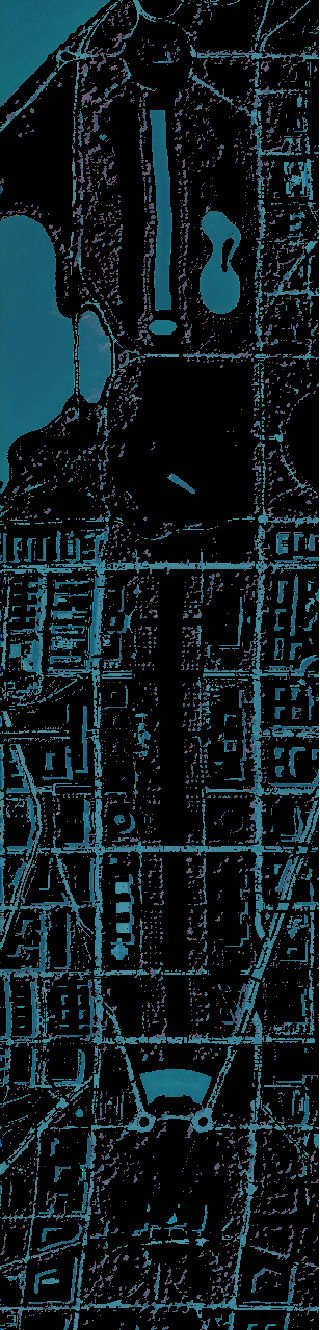}}
    \quad
    \subfloat[]{\includegraphics[width = .2\columnwidth]{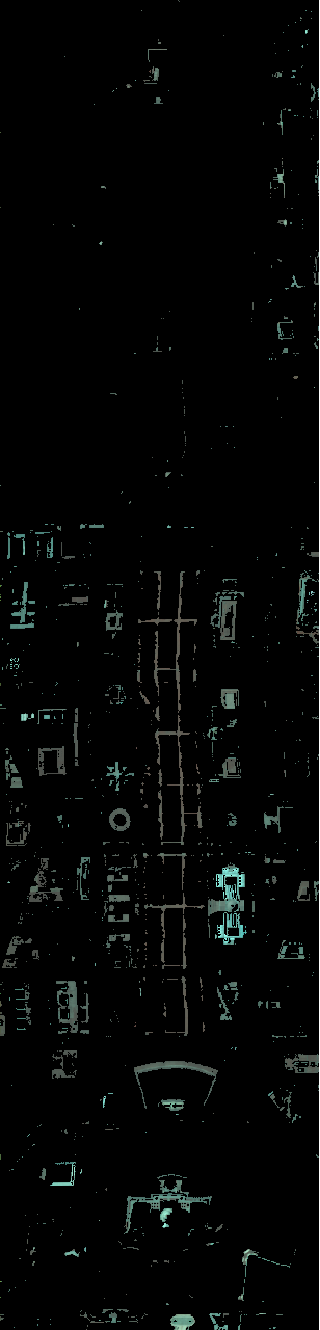}}
    \quad
    \subfloat[]{\includegraphics[width = .2\columnwidth]{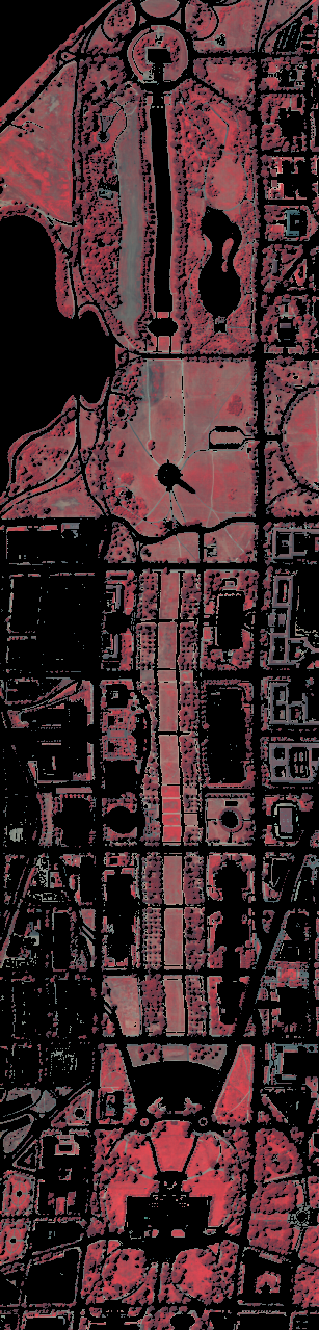}}
    \quad
    \subfloat[]{\includegraphics[width = .2\columnwidth]{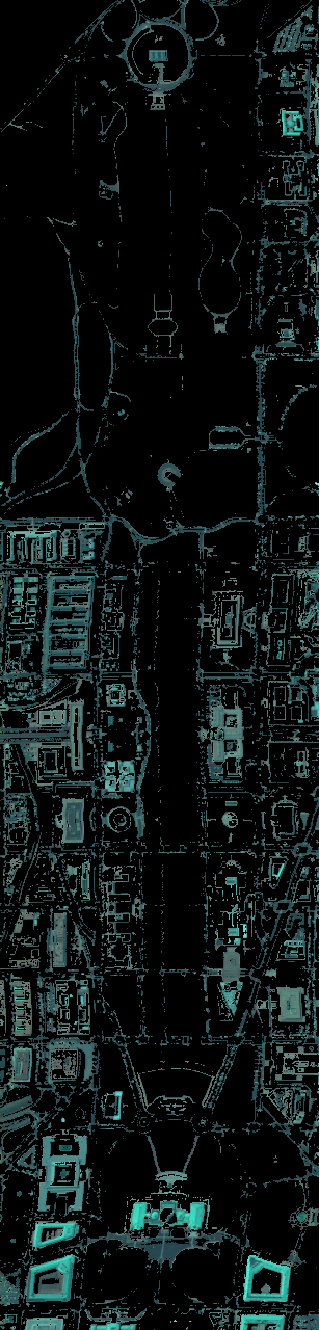}}
    \caption{Classification of pixels into (a) roads and water, (b) concrete, (c) trees and grass, and (d) roofs. The contrast has been exaggerated for presentation.}
   \label{fig:mall_class}
\end{figure}

\section{Hereditary breast cancer dataset}
We adopt the approach of \cite{brunet2004metagenes} to discover ``metagenes'' from gene expression data with NMF. Given a dataset consisting of the expression levels of $d$ genes in $n$ samples, we seek to represent the expression pattern of the samples in terms of conical combinations of a small number of metagenes. The data is usually represented by an expression matrix $X\in\reals^{d\times n}$. In most studies, $d \gg n$. Thus expression matrices are usually ``tall and skinny.'' Mathematically, we seek an approximate factorization of the expression matrix $X = UV^T$ in terms of non-negative factors $U\in\reals^{d\times k}$ and $V\in\reals^{n\times k}$: $X \sim UV^T$. The columns of $U$ are metagenes, and the rows of $V$ are the coefficients of the conical combinations.

The hereditary breast cancer data\-set collected by \cite{Hedenfalk2001Gene} consists of the expression levels of 3226 genes on 22 samples from breast cancer patients. The patients consist of three groups: 7 patients with a BRCA1 mutation, 8 samples with a BRCA2 mutation,
and 7 additional patients with sporadic (either estrogen-receptor-negative, aggressive cancers or estrogen-receptor-positive, less aggressive) cancers. The dataset is available at \url{http://www.expression.washington.edu/publications/kayee/bma/}. We exponentiate the data to make the log-expression levels non-negative.

We normalize the expression profiles (columns of $X$) and look for extreme points with the proto-algorithm. Figure \ref{fig:brca-scree-plot} shows a scree plot of how often each extreme point is found by the proto-algorithm. Figure \ref{fig:brca-scree-plot} also shows a plot of the relative residual versus how many extreme points are selected. The extreme points were selected by keeping the points found most often. On both plots, we notice an ``elbow'' at 6. This suggests the expression matrix is nearly-separable and has non-negative rank 6. Biologically, this means the expression pattern is mostly explained by the expression pattern of 6 metagenes.

\begin{figure} 
\includegraphics[width = 0.48\textwidth]{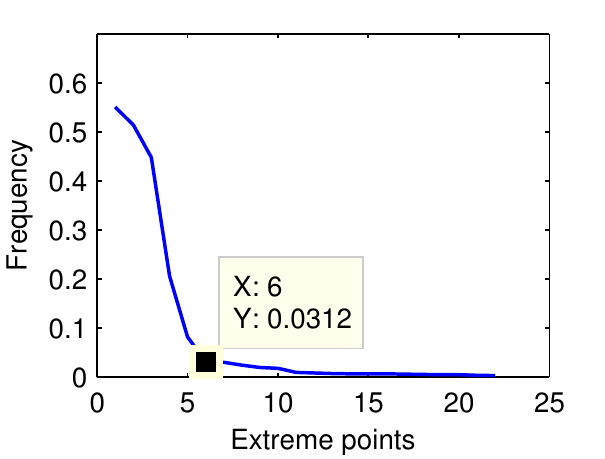}
\includegraphics[width = 0.48\textwidth]{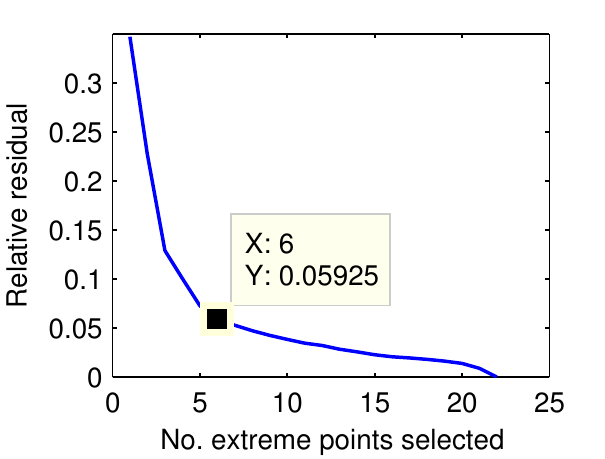}
\caption{Scree plot (left) of how often each extreme point is found and plot of the relative residual $\norm{X - UV^T}_F / \norm{X}_F$ (right) versus how many extreme points are selected. There is a noticeable ``elbow'' at 6 on both plots.} 
\label{fig:brca-scree-plot}
\end{figure}

We also selected metagenes by sparse regression \eqref{eq:group-lasso}. To compute the regularization path of \eqref{eq:group-lasso}, we implemented a solver on top of TFOCS by \cite{becker2011templates}. Figure \ref{fig:brca-group-lasso} shows a coefficient plot and a spy plot of the regularization path. Although the sparse regression approach accounts for correlation among metagenes, the effect is negligible for the beginning (large regularization parameter) of the regularization path. Figure \ref{fig:brca-metagenes} shows the first 4 metagenes selected by the group lasso approach and by the greedy approach are the same and the sixth metagene selected by the greedy approach is the seventh to enter the regularization path.

\begin{figure}
\hspace*{-8pt}\includegraphics[trim=0pc 4pc 0pc 59pt, clip, scale = 1.04]{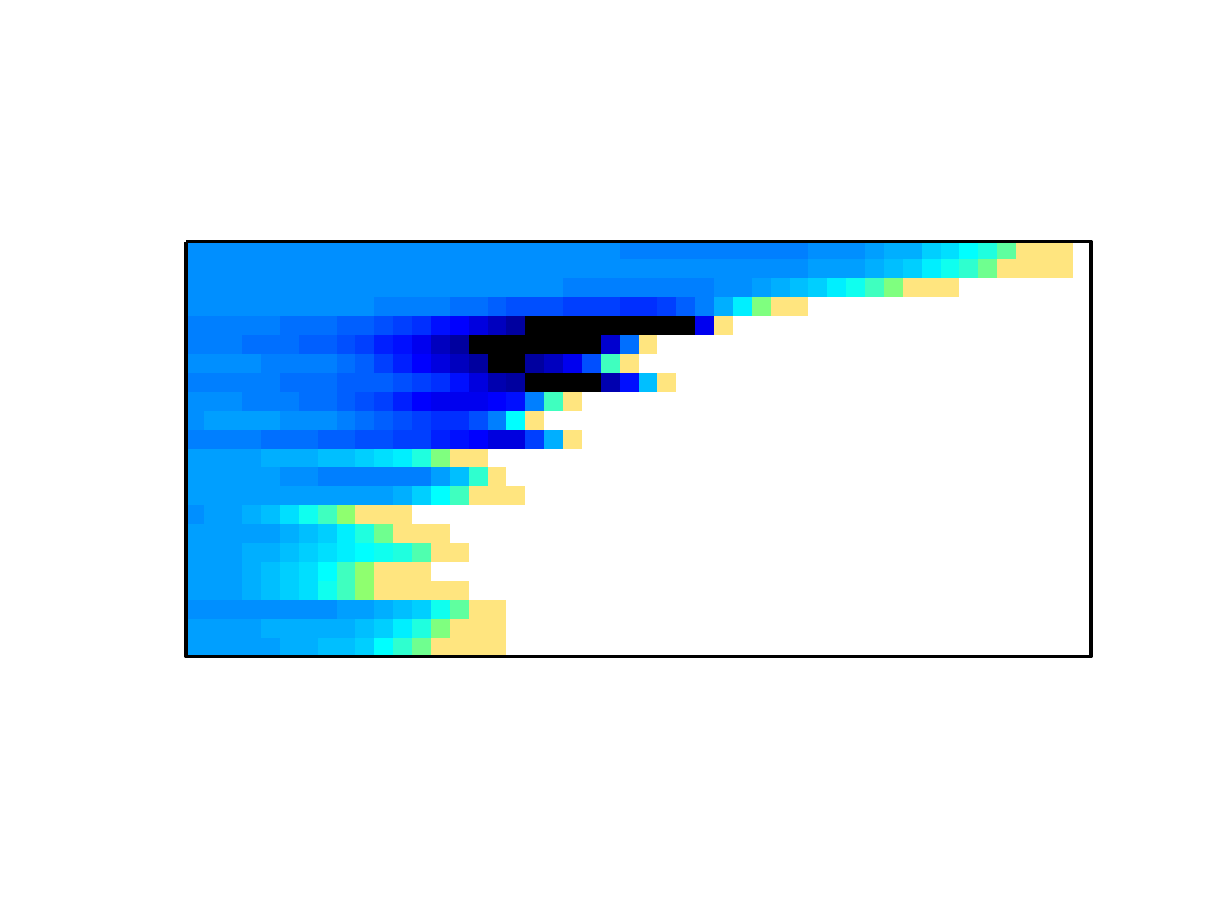} \\
\includegraphics{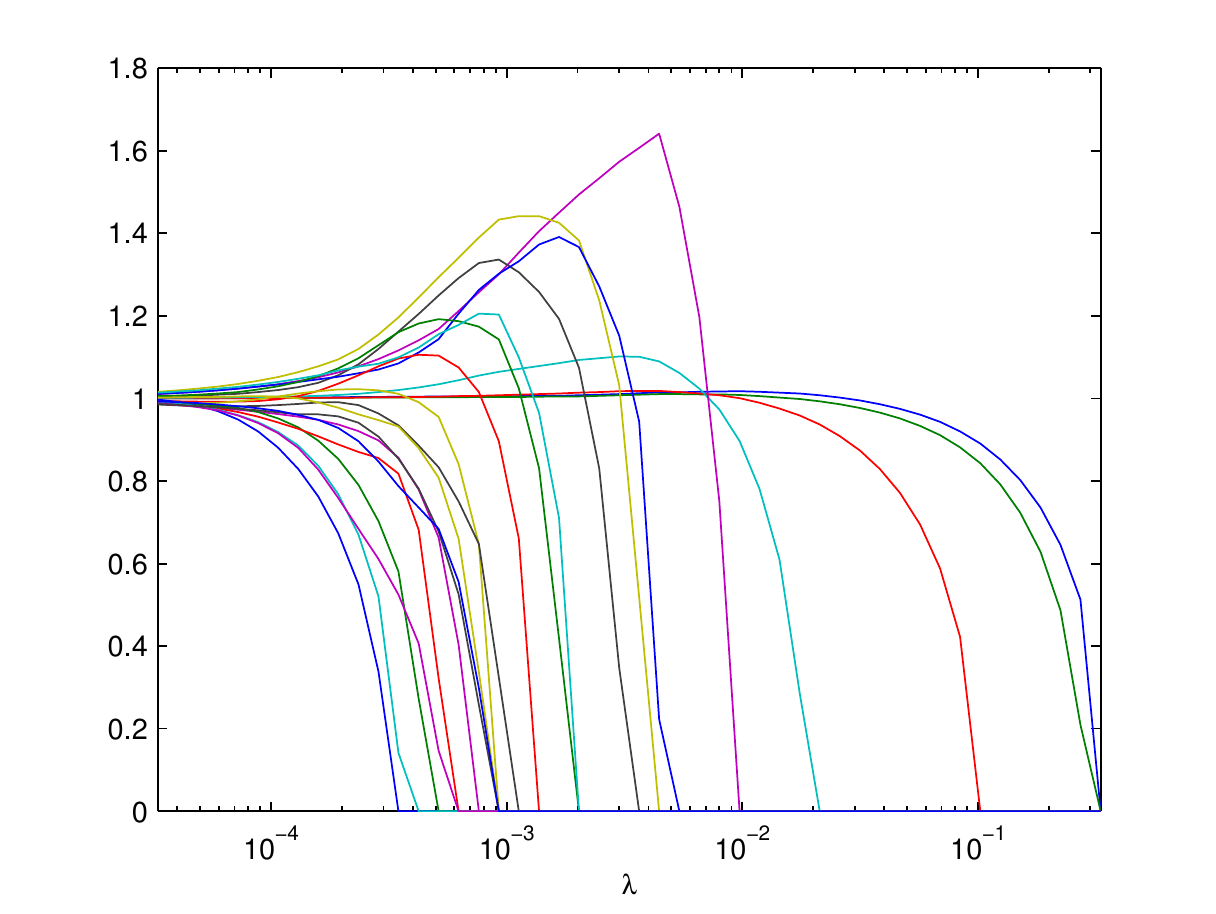}
\caption{A spy plot (top) and a profile (bottom) of the group lasso path ($\norm{v_i}_2,\,i=1,\dots,22$ versus the regularization parameter $\lambda$). The rows (of pixels) in the spy plot and the lines in the profile correspond to groups. In the spy plot, lighter pixels correspond to small coefficients, while darker pixels correspond to large coefficients. }
\label{fig:brca-group-lasso}
\end{figure}

\begin{figure}
\includegraphics{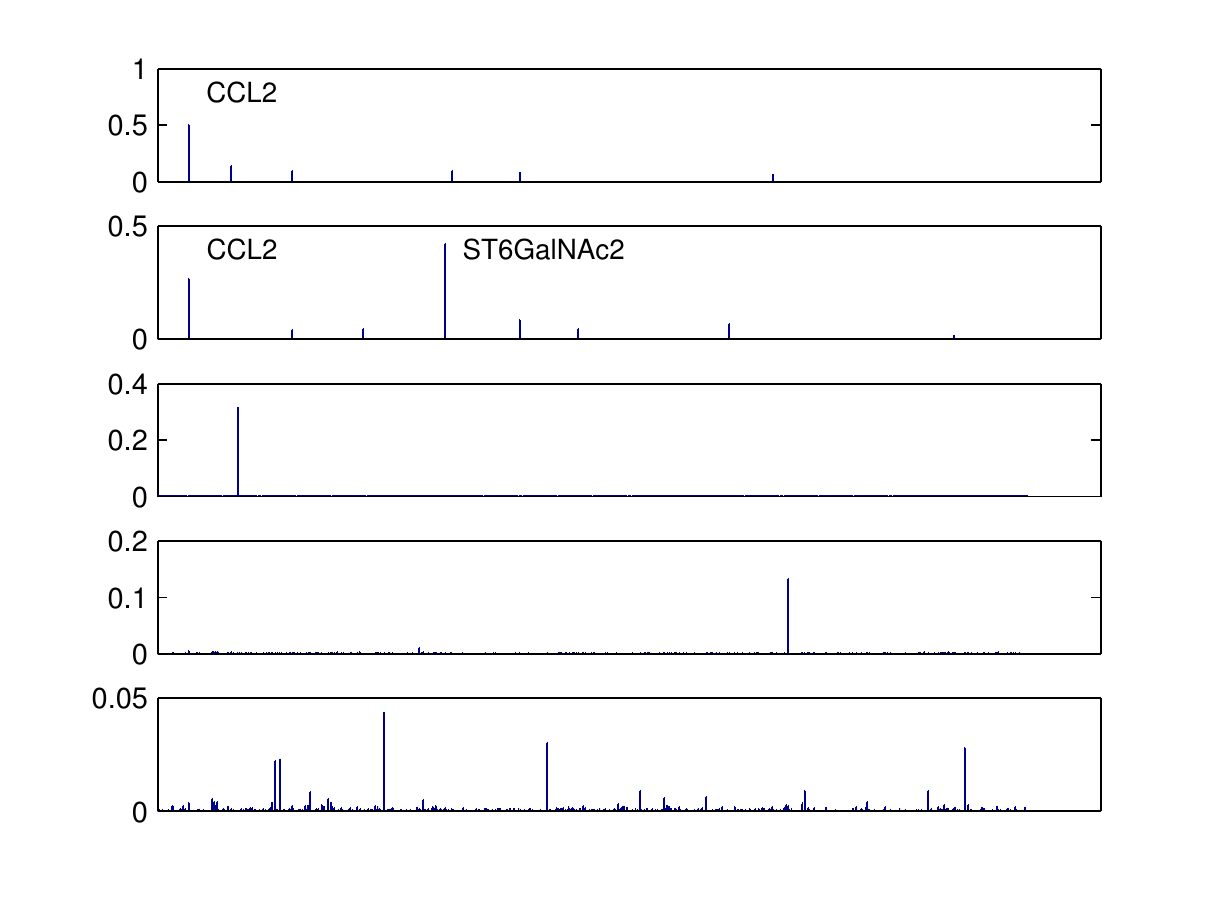}
\caption{The first five metagenes selected by both the greedy and the group lasso approachs. The first two metagenes show high expression levels of an inflammatory chemokine CCL2 (also called MCP-1). \cite{soria2008inflammatory} showed levated CCL2 expression is associated with advanced disease course and with progression in breast cancers. This is consistent with the fact that 12 (of 22) samples in the study were (histologically) graded and all showed moderate to poor-differentiation (grades 6 to 9 on a scale of 1 to 9), an indication of advanced disease progression. The second metagene also shows high expression levels of ST6GalNAc2. Recently, \cite{murugaesu2014in} showed the enzyme
encoded by ST6GalNAc2 is a metastasis suppressor in breast cancers. Unfortunately, the study only included patients with primary cancers, so the data cannot support the association between high expression of ST6GalNAc2 and lower incidence of metastasis. }
\label{fig:brca-metagenes}
\end{figure}

\section{Conclusion and discussion}
\label{sec:conclusion}

Archetype pursuit is a unified approach to archetypal analysis and non-negative matrix factorization. The approach is motivated by a common geometric interpretation of archetypal analysis and separable NMF.
Two key benefits of the approach are
\BNUM
\item \emph{scalability:} The main computational bottleneck is forming the product $XG$, and matrix-vector multiplication is readily parallelizable.
\item \emph{simplicity:} The proto-algorithm is easy to implement and diagnose (when it gives unexpected results).
\ENUM
Furthermore, our simulation results show the approach is robust to noise.

In the context of NMF, an additional benefit is that our approach gives interpretable results even when the matrix is not separable. When the matrix is not separable, the approach no longer gives the (smallest non-negative rank) NMF. However, the geometric interpretation remains valid. Thus, instead of the (minimum non-negative rank) NMF, our approach gives two non-negative factors $W$ and $H$ such that $X\approx WH$, where the rows of $H$ are the extreme rays of a polyhedral cone that contains most of the rows of $X$. An alternative approach in the non separable case is to utilize semidefinite preconditioning techniques proposed by \cite{gillis2013semidefinite}.

\bigskip
\begin{center}
{\large\bf SUPPLEMENTARY MATERIAL}
\end{center}

\begin{proof}[Proof of Lemma \ref{lem:simplifical-constant-1}]
Without loss of generality, assume $h_i$ is the origin. Let $K\subset\reals^k$ be the smallest circular cone with axis
\[
a = \frac{1}{\alpha_i}P_{\conv\left(\ext(P)\setminus h_i\right)}(h_i)
\]
that contains $P$. Since $K$ is a cone, it also contains $T_P(h_i) = \cone(P)$. Thus $K^\circ\subset N_P(h_i)$ and $\omega(K^{\circ}) \le \omega_i$. Figure~\ref{fig:extcone} gives an illustrates the described geometry in a simple case.

\begin{figure}[ht!]
  \centering
  \includegraphics[width = .5\textwidth]{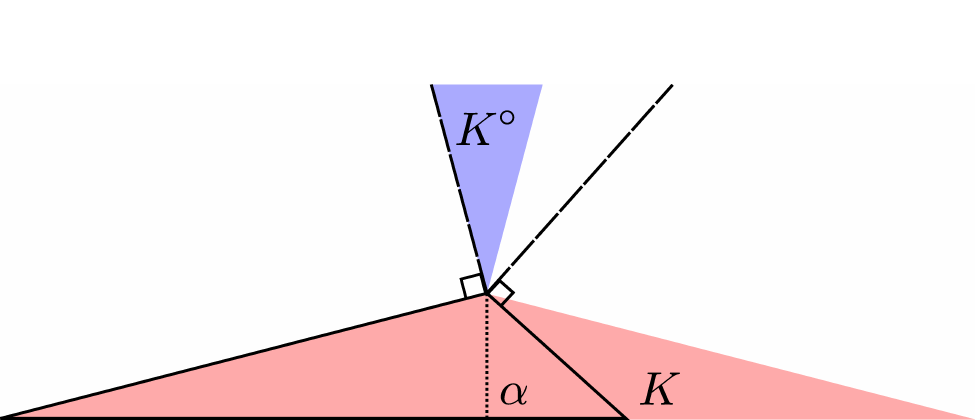}
  \caption{Example of the cone $K$ from the proof of lemma~\ref{lem:simplifical-constant-1}. The normal cone is outlined by the dashed black line and the code $K$ and its polar are shaded.}
  \label{fig:extcone}
\end{figure}

Further, $K^\circ$ is a circular cone with axis $-a$. By Lemma \ref{lem:area-spherical-cap-2}, the radius of the spherical cap $K^\circ\cap\bS^{k-1}$ is at most
\BEQ
r(\omega_i) = 2(2\omega)^{1/(k-1)}.
\label{eq:simplifical-constant-1-1}
\EEQ
Thus the angle of $K^\circ$ is at most $\arccos\left(1-\frac12r(\omega_i)^2\right)$ and the angle of $K$ is at least
\[
\frac{\pi}{2} - \arccos\left(1-\frac12r(\omega_i)^2\right) = \arcsin\left(1-\frac12r(\omega_i)^2\right).
\]
To obtain a bound on the simplicial constant $\alpha_i = \alpha_P(h_i)$, we  study 2-dimensional slices of $K$ and $P$:
\[
K\cap\linspan\left(a,\hat{n}\right)\text{ and }P\cap\linspan\left(a,\hat{n}\right)\text{ for any }\hat{n}\perp a.
\]
Given a slice of $P$ along the direction $\hat{n}$, the simplicial constant is given by
\[
\alpha_i = \frac{r_{\hat{n}}}{\tan(\theta_{\hat{n}})}
\]
for some radius $r_{\hat{n}}$ and some angle $\theta_{\hat{n}}\in\left[0,\frac{\pi}{2}\right)$. Since $K$ is the smallest circular cone (with axis $a$) that contains $P$, the angle for $K$ is equal to $\theta_{\hat{n}}$ for some slice. Further, $P\subset K$ so $r_{\hat{n}}$ is at most the diameter of the ``base'' of the pyramid that is the convex hull of the neighbors of $h_i.$ Mathematically, the base is the set
\[
B_P(h_i) = \conv\left(\left\{u\in\ext(P)\setminus h_i\mid u\text{ shares a face with }h_i\right\}\right).
\]
Thus
\[
\alpha_i \le \frac{\diam{\left(B_P(h_i)\right)}}{\tan\left(\arcsin\left(1-\frac12r(\omega_i)^2\right)\right)} \leq \diam{\left(B_P(h_i)\right)}\frac{r(\omega_i)\sqrt{1-\frac14r(\omega_i)^2}}{1-\frac{1}{2}r(\omega_i)^2}.
\]
\end{proof}

\begin{proof}[Proof of Lemma \ref{lem:simplifical-constant-2}]
The proof is similar to the proof of Lemma \ref{lem:simplifical-constant-1}. Assume (w.l.o.g.) $h_i$ is at the origin. Let $K\subset\reals^k$ be the largest circular cone that sits in $T_P(h_i),$ and let $a \in \bS^{k-1}$ be its axis. Figure \ref{fig:intcone} gives an illustrates the described geometry in a simple case.

\begin{figure}[ht!]
  \centering
  \includegraphics[width = .5\columnwidth]{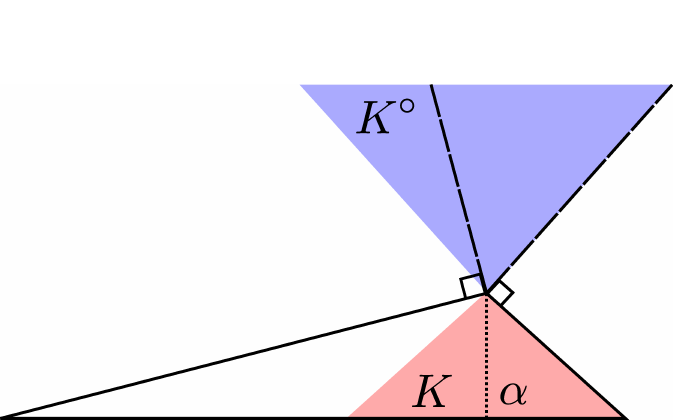}
  \caption{Example of the cone $K$ from the proof of lemma~\ref{lem:simplifical-constant-2}. The normal cone is outlined by the dashed black line and the code $K$ and its are shaded.}
  \label{fig:intcone}
\end{figure}

Let
\[
B = T_P(h_i)\cap\{x\in\reals^k\mid a^T(x - P_{\conv\left(\ext(P)\setminus h_i\right)}(h_i)) = 0\}.
\]
Consider the 2-dimensional slices of $K$ and $P$ given by
\[
K\cap\linspan\left(a,\hat{n}\right)\text{ and }P\cap\linspan\left(a,\hat{n}\right)\text{ for any }\hat{n}\perp a.
\]
Given a slice of $P$ along the direction $\hat{n}$, a bound on the simplicial constant is
\[
\alpha_i \geq \frac{r_{\hat{n}}}{\tan(\theta_{\hat{n}})}
\]
for some radius $r_{\hat{n}}$ and some angle $\theta_{\hat{n}}\in\left[0,\frac{\pi}{2}\right)$. Since $K$ is the largest circular cone that sits in $T_P(h_i)$, the angle of $K$ is equal to $\theta_{\hat{n}}$ for some slice. Further, $r_{\hat{n}}$ is well defined and its value depends only on the geometry of $B.$ We let $r_{\min}$ be the smallest possible value of $r_{\hat{n}}.$ Thus the angle of $K$ is at least $\arctan\left(\frac{r_{\min}}{\alpha_i}\right)$. Since $K^\circ$ is a circular cone with axis $-a$, the angle of $K^\circ$ is at most
\[
\frac{\pi}{2} - \arctan\left(\frac{r_{\min}}{\alpha_i}\right) = \arccot\left(\frac{r_{\min}}{\alpha_i}\right).
\]
An elementary trigonometric calculation shows the height of the spherical cap $\bS^{k-1}\cap K^\circ$ is at least
\[
\cos\left(\arccot\left(\frac{r_{\min}}{\alpha_i}\right)\right) = \frac{r_{\min}}{\sqrt{\alpha_i^2 + \left(r_{\min}\right)^2}}
\]
By Lemma \ref{lem:area-spherical-cap-1}, the solid angle of $K^\circ$ is at most
\[
\left(\frac{\sqrt{\alpha_i^2 + \left(r_{\min}\right)^2}}{2r_{\min}}\right)^k,
\]
since $K\subset T_P(h_i)$, $N_P(h_i)\subset K^\circ$ and $\omega_i \le \omega(K^\circ)$.
\end{proof}


\begin{proof}[Proof of Lemma \ref{lem:projection-onto-intersection}]
Given a closed convex cone $K\subset\reals^n$, a point $x\in\reals^n$ has a unique orthogonal decomposition into $P_K(x) + P_{K^\circ}(x)$. To show $P_{K_2^n}\left(P_{\reals_+^{n-1}\times\reals}\left(x\right)\right)$ is the projection of $x$ onto $K_2^n\cap\reals_+^n$, it suffices to check
\BNUM
\item $P_{K_2^n}\left(P_{\reals_+^{n-1}\times\reals}\left(x\right)\right)\in K_2^n\cap\reals_+^n$
\item $x - P_{K_2^n}\left(P_{\reals_+^{n-1}\times\reals}\left(x\right)\right) \in \left(K_2^n\cap\reals_+^n\right)^\circ = \conv\left(-K_2^n\cap-\reals_+^n\right)$
\item $P_{K_2^n}\left(P_{\reals_+^{n-1}\times\reals}\left(x\right)\right)\perp x - P_{K_2^n}\left(P_{\reals_+^{n-1}\times\reals}\left(x\right)\right)$
\ENUM
for any point $x\in\reals^n$. To begin, we decompose $x$ into its projection onto $\reals_+^{n-1}\times\reals$ and $\left(\reals_+^{n-1}\times\reals\right)^\circ$:
\[
x = P_{\reals_+^{n-1}\times\reals}(x) + P_{\left(\reals_+^{n-1}\times\reals\right)^\circ}(x).
\]
We further decompose $P_{\reals_+^{n-1}\times\reals}(x)$ into its projection onto $K_2^n$ and $K^\circ = -K_2^n$:
\[
P_{\reals_+^{n-1}\times\reals}(x) = P_{K_2^n}\left(P_{\reals_+^{n-1}\times\reals}(x)\right) + P_{-K_2^n}\left(P_{\reals_+^{n-1}\times\reals}(x)\right).
\]
The projection onto $K_2^n$ preserves the zero pattern of $P_{\reals_+^{n-1}\times\reals}(x)$. Thus a point $x\in\reals^n$ admits the decomposition
\[
x = P_{K_2^n}\left(P_{\reals_+^{n-1}\times\reals}(x)\right) + P_{-K_2^n}\left(P_{\reals_+^{n-1}\times\reals}(x)\right) + P_{\left(\reals_+^{n-1}\times\reals\right)^\circ}(x),
\]
where the three parts are mutually orthogonal. Given this decomposition, it is easy to check 1, 2, and 3.
\end{proof}








\bibliographystyle{plain}
\bibliography{yuekai}

\begin{thebibliography}{10}

\bibitem{arora2012computing}
S.~Arora, R.~Ge, R.~Kannan, and A.~Moitra.
\newblock Computing a nonnegative matrix factorization -- provably.
\newblock In {\em Proceedings of the 44th symposium on Theory of Computing},
  pages 145--162, 2012.

\bibitem{ball1997elementary}
Keith Ball.
\newblock An elementary introduction to modern convex geometry.
\newblock {\em Flavors of geometry}, 31:1--58, 1997.

\bibitem{becker2011templates}
Stephen~R Becker, Emmanuel~J Cand{\`e}s, and Michael~C Grant.
\newblock Templates for convex cone problems with applications to sparse signal
  recovery.
\newblock {\em Mathematical Programming Computation}, 3(3):165--218, 2011.

\bibitem{bittorf2012factoring}
V.~Bittorf, B.~Recht, C.~Re, and J.A. Tropp.
\newblock Factoring nonnegative matrices with linear programs.
\newblock In {\em Advances in Neural Information Processing Systems 25}, pages
  1223--1231, 2012.

\bibitem{boardman1994geometric}
J.W. Boardman.
\newblock Geometric mixture analysis of imaging spectrometry data.
\newblock In {\em Geoscience and Remote Sensing Symposium (IGARSS) '94},
  volume~4, pages 2369--2371 vol.4, 1994.

\bibitem{brunet2004metagenes}
Jean-Philippe Brunet, Pablo Tamayo, Todd~R Golub, and Jill~P Mesirov.
\newblock Metagenes and molecular pattern discovery using matrix factorization.
\newblock {\em Proceedings of the National Academy of Sciences},
  101(12):4164--4169, 2004.

\bibitem{chang2006fast}
Chein-I Chang and A.~Plaza.
\newblock A fast iterative algorithm for implementation of pixel purity index.
\newblock {\em Geoscience and Remote Sensing Letters, IEEE}, 3(1):63--67, Jan
  2006.

\bibitem{cutler1994archetypal}
Adele Cutler and Leo Breiman.
\newblock Archetypal analysis.
\newblock {\em Technometrics}, 36(4):338--347, 1994.

\bibitem{ding2013topic}
Weicong Ding, Mohammad Hossein~Rohban, Prakash Ishwar, and Venkatesh Saligrama.
\newblock Topic discovery through data dependent and random projections.
\newblock In {\em Proceedings of The 30th International Conference on Machine
  Learning}, pages 1202--1210, 2013.

\bibitem{donoho2003when}
D.L. Donoho and V.~Stodden.
\newblock When does non-negative matrix factorization give a correct
  decomposition into parts?
\newblock In {\em Advances in neural information processing systems}, 2003.

\bibitem{esser2012convex}
E.~Esser, M.~Moller, S.~Osher, G.~Sapiro, and J.~Xin.
\newblock A convex model for nonnegative matrix factorization and
  dimensionality reduction on physical space.
\newblock {\em Image Processing, IEEE Transactions on}, 21(7):3239--3252, 2012.

\bibitem{gillis2013robustness}
N.~Gillis.
\newblock Robustness analysis of hottopixx, a linear programming model for
  factoring nonnegative matrices.
\newblock {\em SIAM Journal on Matrix Analysis and Applications},
  34(3):1189--1212, 2013.

\bibitem{gillis2013semidefinite}
N.~{Gillis} and S.~A. {Vavasis}.
\newblock {Semidefinite Programming Based Preconditioning for More Robust
  Near-Separable Nonnegative Matrix Factorization}.
\newblock {\em ArXiv e-prints}, October 2013.

\bibitem{gillis2012fast}
N.~Gillis and S.A. Vavasis.
\newblock Fast and robust recursive algorithms for separable nonnegative matrix
  factorization.
\newblock {\em arXiv preprint arXiv:1208.1237}, 2012.

\bibitem{Hedenfalk2001Gene}
Ingrid Hedenfalk, David Duggan, Yidong Chen, Michael Radmacher, Michael
  Bittner, Richard Simon, Paul Meltzer, Barry Gusterson, Manel Esteller, Mark
  Raffeld, Zohar Yakhini, Amir Ben-Dor, Edward Dougherty, Juha Kononen, Lukas
  Bubendorf, Wilfrid Fehrle, Stefania Pittaluga, Sofia Gruvberger, Niklas
  Loman, Oskar Johannsson, Håkan Olsson, Benjamin Wilfond, Guido Sauter,
  Olli-P. Kallioniemi, Åke Borg, and Jeffrey Trent.
\newblock Gene-expression profiles in hereditary breast cancer.
\newblock {\em New England Journal of Medicine}, 344(8):539--548, 2001.
\newblock PMID: 11207349.

\bibitem{kim2012group}
Jingu Kim, Renato Monteiro, and Haesun Park.
\newblock Group sparsity in nonnegative matrix factorization.
\newblock In {\em SDM}, pages 851--862. SIAM, 2012.

\bibitem{kumar2013fast}
A.~Kumar, V.~Sindhwani, and P.~Kambadur.
\newblock Fast conical hull algorithms for near-separable non-negative matrix
  factorization.
\newblock In {\em Proceedings of the 30th International Conference on Machine
  Learning}, 2013.

\bibitem{landgrebe2003signal}
David~A Landgrebe.
\newblock {\em Signal theory methods in multispectral remote sensing}.
\newblock John Wiley \& Sons, 2003.

\bibitem{lee2001algorithms}
Daniel~D Lee and H~Sebastian Seung.
\newblock Algorithms for non-negative matrix factorization.
\newblock In {\em Advances in neural information processing systems}, pages
  556--562, 2001.

\bibitem{lee1999learning}
D.D. Lee and H.S. Seung.
\newblock Learning the parts of objects by non-negative matrix factorization.
\newblock {\em Nature}, 401(6755):788--791, 1999.

\bibitem{murugaesu2014in}
Nirupa Murugaesu, Marjan Iravani, Antoinette van Weverwijk, Aleksandar Ivetic,
  Damian~A Johnson, Aristotelis Antonopoulos, Antony Fearns, Mariam
  Jamal-Hanjani, David Sims, Kerry Fenwick, et~al.
\newblock An in vivo functional screen identifies st6galnac2 sialyltransferase
  as a breast cancer metastasis suppressor.
\newblock {\em Cancer discovery}, 4(3):304--317, 2014.

\bibitem{nascimento2005vertex}
J.M.P. Nascimento and J.M. Bioucas~Dias.
\newblock Vertex component analysis: a fast algorithm to unmix hyperspectral
  data.
\newblock {\em Geoscience and Remote Sensing, IEEE Transactions on},
  43(4):898--910, April 2005.

\bibitem{obozinski2011support}
Guillaume Obozinski, Martin~J Wainwright, Michael~I Jordan, et~al.
\newblock Support union recovery in high-dimensional multivariate regression.
\newblock {\em The Annals of Statistics}, 39(1):1--47, 2011.

\bibitem{soria2008inflammatory}
Gali Soria and Adit Ben-Baruch.
\newblock The inflammatory chemokines ccl2 and ccl5 in breast cancer.
\newblock {\em Cancer letters}, 267(2):271--285, 2008.

\bibitem{vavasis2009complexity}
S.A. Vavasis.
\newblock On the complexity of nonnegative matrix factorization.
\newblock {\em SIAM Journal on Optimization}, 20(3):1364--1377, 2009.

\bibitem{yuan2006model}
Ming Yuan and Yi~Lin.
\newblock Model selection and estimation in regression with grouped variables.
\newblock {\em Journal of the Royal Statistical Society: Series B (Statistical
  Methodology)}, 68(1):49--67, 2006.

\end{thebibliography}

\end{document}